\documentclass[journal,onecolumn]{IEEEtran}

\usepackage{mathrsfs}
\usepackage{color}
\usepackage{enumerate}
\usepackage{mathrsfs}
\usepackage{amssymb}
\usepackage{amsfonts}
\usepackage{amsmath}
\usepackage{multirow}

\newtheorem{theorem}{Theorem}[section]

\newtheorem{lemma}[theorem]{Lemma}

\newtheorem{problem}{Problem}
\newtheorem{conjecture}[problem]{Conjecture}

\newtheorem{remark}{Remark}[section]


\newcommand{\F}{\mathbb{F}}

\newcommand{\GFn}{\text{GF}(p^n)}

\newcommand{\GFtlm}{\text{GF}(2^{2lm})}

\newcommand{\GFtm}{\text{GF}(2^{2m})}
\newcommand{\al}{\alpha}
\newcommand{\be}{\beta}

\newcommand{\om}{\omega}

\newcommand{\Sd}{\text{S}_d(z)}
\newcommand{\Cd}{\text{C}_d(z)}
\newcommand{\nin}{n_{\infty}(z)}
\newcommand{\nino}{n_{\infty}(z_1)}
\newcommand{\nint}{n_{\infty}(z_2)}
\newcommand{\nz}{n_{0}(z)}
\newcommand{\nzo}{n_{0}(z_1)}
\newcommand{\nzt}{n_{0}(z_2)}
\newcommand{\no}{n_{1}(z)}
\newcommand{\noo}{n_{1}(z_1)}
\newcommand{\nott}{n_{1}(z_2)}
\newcommand{\Tr}{\text{Tr}}

\begin{document}

\title{Some New Results on the Cross Correlation of $m$-sequences}

\author{Tao Zhang, Shuxing Li, Tao Feng and Gennian Ge
\thanks{The research of T. Feng was supported by Fundamental Research Fund for the Central Universities of China,
Zhejiang Provincial Natural Science Foundation under Grant LQ12A01019, the National Natural Science Foundation
of China under Grant 11201418, and the Research Fund for Doctoral Programs from the Ministry of Education of
China under Grant 20120101120089. The research of G. Ge was supported by the National Natural Science Foundation of China under Grant No.~61171198 and Zhejiang Provincial Natural Science Foundation of China under Grant No.~LZ13A010001.}
\thanks{T. Zhang is with the Department of Mathematics, Zhejiang University,
Hangzhou 310027,  China (e-mail: tzh@zju.edu.cn).}
\thanks{S. Li is with the Department of Mathematics, Zhejiang University,
Hangzhou 310027,  China (e-mail: sxli@zju.edu.cn).}
\thanks{T. Feng is with the Department of Mathematics, Zhejiang University,
Hangzhou 310027,  China (e-mail: tfeng@zju.edu.cn).}
\thanks{G. Ge is with  the School of Mathematical Sciences, Capital Normal University,
Beijing 100048, China (e-mail: gnge@zju.edu.cn).}
}

\maketitle

\begin{abstract}
The determination of the cross correlation between an $m$-sequence and its decimated sequence has been a long-standing research problem. Considering a ternary $m$-sequence of period $3^{3r}-1$, we determine the cross correlation distribution for decimations $d=3^{r}+2$ and $d=3^{2r}+2$, where $\gcd(r,3)=1$. Meanwhile, for a binary $m$-sequence of period $2^{2lm}-1$, we make an initial investigation for the decimation  $d=\frac{2^{2lm}-1}{2^{m}+1}+2^{s}$, where $l \ge 2$ is even and $0 \le s \le 2m-1$. It is shown that the cross correlation takes at least four values. Furthermore, we confirm the validity of two famous conjectures due to Sarwate et al. and Helleseth in this case.
\end{abstract}

\begin{keywords}
cross correlation, cross correlation distribution, decimation, $m$-sequences, Weil sums
\end{keywords}

\section{Introduction}

During the last decades, many applications of sequences with low correlation have been found in cryptography, radar and wireless communication systems \cite{GG05}. In the CDMA system, a popular method to spread the spectrum is the use of sequences. With the low correlation property, the interference of different users during the transmission can be minimized. Therefore, sequences with good correlation properties have been an important research problem enjoying considerable interests \cite{HK98}.

Let $p$ be a prime. Let $\{a_t\}$ and $\{b_t\}$ be two sequences of period $N$ with elements from a finite field $\textup{GF}(p)$. The cross correlation between $\{a_t\}$ and $\{b_t\}$ at shift $\tau$ is defined by
$$C_{a,b}(\tau)=\sum_{t=0}^{N-1}\omega^{a_{t+\tau}-b_{t}},$$
where $0 \le \tau < N$ and $\omega$ is a complex $p$-th root of unity.

As a pseudorandom sequence with ideal two-level auto correlation, the maximal linear sequence ($m$-sequence) has attracted much attention. Many researches focus on the cross correlation properties, since low cross correlation guarantees good correlation properties for $m$-sequences (see \cite{TH76,CD96,CCD00,DHKM01,HX01,C04,DFHR06,K12} and the references therein).

Recall that the trace function from finite field $E=\textup{GF}(p^{n})$ onto its subfield $F=\textup{GF}(p^{r})$ is defined by
$$Tr_{r}^{n}(x)=x+x^{p^{r}}+x^{p^{2r}}+...+x^{p^{n-r}}.$$
For $r=1$, we get the absolute trace function mapping onto the prime field $GF(p)$, which is denoted by $Tr_{n}$ or $Tr$. A $p$-ary $m$-sequence $\{a_{t}\}$ of period $p^n-1$ can be represented by
$$
a_t=\Tr(\be\al^t), \quad 0 \le t \le p^n-2,
$$
where $\be \in \GFn^*$ and $\al$ is a primitive element of $\GFn$.

Suppose $(d,p^n-1)=1$. The $d$-decimation of $\{a_{t}\}$, which is denoted by $\{a_{dt}\}$, is also an $m$-sequence with the same period. Note that if $d \in \{1,p,\ldots,p^{n-1}\}$, $\{a_{dt}\}$ is simply a cyclic shift of $\{a_{t}\}$. The cross correlation between $\{a_{t}\}$ and $\{a_{dt}\}$ takes two values and is easy to compute \cite[Theorem 3.1]{TH76}. Below, we always consider the nondegenerate decimation $d$ where $d \not\in \{1,p,\ldots,p^{n-1}\}$. The cross correlation between an $m$-sequence of period $p^{n}-1$ and its $d$-decimation can be described by
\begin{align*}
C_d(\tau) &= \sum_{t=0}^{p^n-2} \om^{a_{t+\tau}-a_{dt}} \\
          &= -1+\sum_{x \in \GFn} \chi(\al^{\tau}x-x^d), \\
\end{align*}
where $\chi(x)=w^{\Tr(x)}$ for any $x \in \GFn$ and $0 \le \tau \le p^n-2$. Clearly, calculating the cross correlation value is to compute the Weil sum
$$
\Cd=\sum_{x \in \GFn^*} \chi(zx-x^d),
$$
where $z \in \GFn^*$. Hence, computing the cross correlation distribution is to determine the multiset
$$
\{\Cd \, \mid \, z \in \GFn\}.
$$
Noting that the cross correlation distribution essentially arises in many other contexts with various names, please refer to the appendix of \cite{K12} for more details.

For the cross-correlation function between an $m$-sequence and its $d$-decimation, an overview of known results can be found in \cite{TH76,DFHR06,CN12}. Besides, further generalizations have been made to study the cross correlation of an $m$-sequence and its $d$-decimated sequence with $\gcd(d,p^{n}-1)>1$ (see \cite{NH06,NH062,NH07,SKNS08,HHKZLJ09,XZH10}). When $p=2$ and $(d,2^{n}-1)=1$, the known results on the cross correlation distribution of an $m$-sequence and its decimation are listed in Table~\ref{table1}, where $v_2(k)$ is the largest power of $2$ dividing $k$. Meanwhile, Table~\ref{table2} summaries the known results on the cross correlation distribution when $p$ is an odd prime.

There are many methods which have been proposed to determine the cross correlation distribution. With the help of some known exponential sums, Helleseth \cite{TH76,TH03} computed the cross correlation distribution for several decimations. Luo and Feng \cite{LF08} used the technique of quadratic forms to attack this problem. In \cite{DFHR06}, Dobbertin et al. developed a delicate method involving the use of Dickson polynomials.

In this paper, we consider the cross correlation between a ternary $m$-sequence of period $3^{3r}-1$ and its $d$-decimation with $d=3^r+2$ or $d=3^{2r}+2$, where $(r,3)=1$. Following the idea of Dobbertin \cite{D98} and Feng et al. \cite{FLX13}, we completely determine the cross correlation distribution. Besides, for the binary $m$-sequence of period $2^{2lm}-1$ and decimation $d=\frac{2^{2lm}-1}{2^{m}+1}+2^{s}$, with $l \ge 2$ being even and $0 \le s \le 2m-1$, we obtain some results on the cross correlation values. When $l$ is odd, the decimation $d$ is of Niho type, which has been extensively studied \cite{YN70,DHKM01,HX01,C04,HR05,DFHR06}. Recall that any nondegenerate decimation leads to at least three cross correlation values \cite[Theorem 4.1]{TH76}. We further prove that the cross correlation takes at least four values for this decimation. While it seems pretty hard to determine the cross correlation distribution, we confirm the validity of the following two famous conjectures due to Sarwate et al. \cite{SP80} and Helleseth \cite{TH76} respectively. Below, we define $\Sd=\Cd+1$.

\begin{conjecture}{\cite{SP80}}\label{7}
Let $n=2t$ and $p=2$, then $\Sd \geq 2^{t+1}.$
\end{conjecture}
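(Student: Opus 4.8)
The conjecture asks, for $n=2t$ with $t=lm$ and $p=2$, that the maximum magnitude $\max_{z}|\Sd|$ of the Weil sum $\Sd=\sum_{x\in\GFtlm}\chi(zx-x^{d})$ be at least $2^{t+1}$; this is the Sarwate--Pursley bound on the peak cross correlation. The plan is to prove it by contradiction using only two ingredients: the first two power moments of $\Sd$, which are entirely routine, and a single $2$-adic divisibility statement $2^{t}\mid\Sd$, which carries all the weight. The interesting point is that, once the divisibility is in hand, the moments close the argument immediately, so the at-least-four-values theorem is not even needed here.

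The moments follow from the orthogonality relation $\sum_{z\in\GFtlm}\chi(zy)=2^{n}$ if $y=0$ and $0$ otherwise. Interchanging summation order gives $\sum_{z}S_d(z)=2^{n}$ (only $x=0$ survives), and, using that in characteristic two the diagonal $x+y=0$ forces $y=x$ and hence $x^{d}+y^{d}=0$, also $\sum_{z}S_d(z)^{2}=2^{2n}$. I would also record the trivial value $\sum_{x}\chi(-x^{d})=\sum_{x}\chi(x)=0$ attained at $z=0$, which holds because $\gcd(d,2^{n}-1)=1$ makes $x\mapsto x^{d}$ a permutation; thus the entire second-moment mass $2^{2n}$ is carried by the $2^{n}-1$ nonzero values of $z$.

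The heart of the matter is the divisibility $2^{t}\mid\Sd$, and this is the step I expect to be the main obstacle. To obtain it I would exploit the shape $d=\frac{2^{2lm}-1}{2^{m}+1}+2^{s}$ together with the hypothesis that $l$ is even. Since $x^{(2^{2lm}-1)/(2^{m}+1)}$ maps $\GFtlms$ onto the subgroup of order $2^{m}+1$, I would split $\Sd$ along the cosets of $\GFtm^{*}$ in $\GFtlms$, i.e.\ decompose the field over its subfield $\GFtm$. Because $l$ is even this decomposition should be compatible with the exponent, collapsing $\Sd$ to $2^{(l-1)m}$ times a Niho-type Weil sum over $\GFtm$ whose reduced exponent is $\frac{2^{2m}-1}{2^{m}+1}+2^{s'}=2^{m}-1+2^{s'}\equiv 2^{s'}\pmod{2^{m}-1}$; as Niho sums over $\GFtm$ are themselves divisible by $2^{m}$, the product is divisible by $2^{(l-1)m}\cdot2^{m}=2^{t}$. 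Making this rigorous---tracking the coset cross-terms and verifying the reduced exponent is genuinely of Niho type---is exactly where the work lies; failing a clean reduction I would instead try to read $v_{2}(\Sd)\ge t$ off Stickelberger's congruence for the associated Gauss sums. As a consistency check, in the classical Niho case ($l=1$, $n=2m$, $t=m$) the known spectrum is already divisible by $2^{m}=2^{t}$.

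Granting $2^{t}\mid\Sd$, the conclusion is immediate. Suppose for contradiction that $\max_{z}|\Sd|<2^{t+1}$. Being a multiple of $2^{t}$ of absolute value below $2^{t+1}$, each $\Sd$ then lies in $\{-2^{t},0,2^{t}\}$, so $\Sd^{2}\le 2^{2t}$ for every $z$. Summing over the $2^{n}-1=2^{2t}-1$ nonzero values of $z$ yields $\sum_{z\neq0}\Sd^{2}\le(2^{2t}-1)2^{2t}=2^{4t}-2^{2t}<2^{4t}=2^{2n}$, contradicting the second-moment identity. Hence some $z$ satisfies $|\Sd|\ge 2^{t+1}$, which is the asserted bound. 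It is worth stressing that the divisibility is exactly strong enough to force this: a valuation weaker than $2^{t}$ would leave slack for the moment to balance, which is precisely why the subfield structure of $d$ (and the evenness of $l$) is indispensable.
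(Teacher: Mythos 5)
Your closing argument proves the wrong inequality, and this is the fatal gap. Granting the divisibility $2^{t}\mid\Sd$, your moment computation yields only that $|\Sd|\ge 2^{t+1}$ for \emph{some} $z$; but the conjecture as this paper states and proves it (Theorem 3.2(iii)) asserts a large \emph{positive} value: there exists $z$ with $\Sd\ge 2^{lm+1}$. The difference is essential and provably cannot be closed with your ingredients. All of your constraints --- $2^{t}\mid\Sd$ for every $z$, $S_d(0)=0$, $\sum_{z}\Sd=2^{2t}$ and $\sum_{z}\Sd^{2}=2^{4t}$ --- are simultaneously satisfied by a hypothetical spectrum in which $\Sd$ equals $-2^{t+1}$ once, $0$ three times, $-2^{t}$ exactly $2^{2t-1}-2^{t-1}-3$ times and $2^{t}$ exactly $2^{2t-1}+2^{t-1}-1$ times: the counts sum to $2^{2t}$, the first moment is $2^{t}\bigl((2^{t}+2)-2\bigr)=2^{2t}$, the second moment is $2^{2t}\bigl((2^{2t}-4)+4\bigr)=2^{4t}$, yet no value reaches $2^{t+1}$. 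So first- and second-moment information plus $2$-adic divisibility can never force a positive value $\ge 2^{t+1}$; one needs structural knowledge of \emph{which} $z$ produce large values. That is exactly what the paper supplies: via Helleseth's Theorem 3.8 it writes $\Sd=\frac{2^{lm}}{2^{m}+1}\bigl(2^{lm}\nin-2^{m}\nz+\no\bigr)$ with $\nin+\nz+\no=2^{m}+1$, notes that exactly $2^{m}+1$ choices of $z$ satisfy $\nin=1$ (because $(d2^{-s}-1,2^{2lm}-1)=\frac{2^{2lm}-1}{2^{m}+1}$), and shows that for $l>2$ any such $z$ already gives $\Sd\ge\frac{2^{(l+2)m}(2^{(l-2)m}-1)}{2^{m}+1}>2^{lm+1}$, while $l=2$ requires a separate analysis through $u=(d2^{-s}+1,2^{m}+1)$ and the special point $z=1$.

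The step you yourself call the heart of the matter, $2^{t}\mid\Sd$, is indeed true --- it follows from the displayed formula above, since $2^{m}+1$ is odd --- but neither of your proposed routes establishes it. The hoped-for factorization of $\Sd$ as $2^{(l-1)m}$ times a Niho-type sum over $\GFtm$ is not an identity, and the subgroup you propose to split along is the wrong one: the exponent $d$ interacts with the group $C_{0}$ of $(2^{m}+1)$-st powers, of order $\frac{2^{2lm}-1}{2^{m}+1}$, not with the subfield group $\GFtm^{*}$; in fact $\GFtm^{*}\subseteq C_{0}$ holds only when $(2^{m}+1)\mid l$, so a coset decomposition over $\GFtm^{*}$ is in general incompatible with the map $x\mapsto x^{(2^{2lm}-1)/(2^{m}+1)}$. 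The correct derivation writes $x^{d}=x^{(2^{2lm}-1)/(2^{m}+1)}x^{2^{s}}$ and uses the evaluation $\sum_{x}\chi(ax^{2^{m}+1})\in\{2^{2lm},\,-2^{(l+1)m},\,2^{lm}\}$ according to the coset of $a$. In summary: your divisibility lemma is salvageable (by the paper's argument, not by your sketch), your moment identities and $S_d(0)=0$ are correct, but even with all of that granted the method reaches only the magnitude bound, not the signed statement proved in the paper.
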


\begin{conjecture}{\cite[Conjecture 5.1]{TH76}}\label{8}
If $p^{n}>2$ and $d\equiv1\pmod{p-1},$ then $\Sd=0$ for some $z\in\GFn^{*}.$
\end{conjecture}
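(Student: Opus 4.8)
The final statement is Helleseth's Conjecture~\ref{8}; since we only assert it for the decimation at hand, the goal is to show that for $p=2$, $n=2lm$ and $d=\frac{2^{2lm}-1}{2^{m}+1}+2^{s}$ with $l\ge 2$ even, there is a nonzero $z$ with $\Sd=0$. For $p=2$ the hypothesis $d\equiv 1\pmod{p-1}$ is vacuous and $2^{n}>2$, so only the existence of such a $z$ is at issue. In characteristic two $zx-x^{d}=zx+x^{d}$, hence $\Sd=\sum_{x\in\GFtlm}(-1)^{\Tr(zx+x^{d})}$, and $\Sd=0$ precisely when $x\mapsto\Tr(zx+x^{d})$ is balanced. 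The trivial choice $z=0$ is balanced but excluded (since $x\mapsto x^{d}$ permutes $\GFtlm$), so the whole task is to manufacture a balanced $z\neq0$.

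First I would pass to the index-two subfield. Put $q=2^{lm}$, $K=\GFlm$, $L=\GFtlm$, and use $\GFtlm^{*}=K^{*}\times C$ with $C=\{\rho:\rho^{q+1}=1\}$ of order $q+1$ (the two factors have coprime orders because $q$ is even). The hypothesis that $l$ is even is exactly what makes the exponents reduce cleanly: since $2^{m}+1\mid q-1$ the term $\frac{q^{2}-1}{2^{m}+1}=\frac{q-1}{2^{m}+1}(q+1)$ is a multiple of $q+1$, so $d\equiv 2^{s}\pmod{q+1}$ and $e:=d\bmod(q-1)=2\frac{q-1}{2^{m}+1}+2^{s}$; writing $x=a\rho$ with $a\in K^{*}$, $\rho\in C$ then gives $x^{d}=a^{e}\rho^{2^{s}}$. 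Collapsing the absolute trace through $\Tr=\Tr_{K/\F_2}\circ\Tr_{L/K}$ and using $\rho^{q}=\rho^{-1}$ together with $\Tr_{L/K}(z\rho)=z\rho+z^{q}\rho^{-1}$, I obtain the reduction
$$\Sd=1+\sum_{a\in K^{*}}\ \sum_{\rho\in C}(-1)^{\Tr_{L}(za\rho+a^{e}\rho^{2^{s}})}.$$
Thus $\Sd$ is governed by a family, indexed by $a\in K^{*}$, of Niho-type circle sums over $C$, and confirming the conjecture amounts to choosing $z_{0}\neq0$ for which this double sum equals $-1$.

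Next I would select $z_{0}\in K^{*}$, which turns the coefficient $z_{0}\rho+z_{0}^{q}\rho^{-1}$ into $z_{0}\beta$ with $\beta=\rho+\rho^{-1}=\Tr_{L/K}(\rho)\in K$; as $\rho$ runs over $C$, $\beta$ hits $0$ once and each admissible nonzero value (those with $\Tr_{K/\F_2}(1/\beta)=1$) twice. The mechanism I expect is a resonance condition: the contribution concentrates where $z_{0}\beta$ and $\beta^{2^{s}}$ are aligned, so choosing $z_{0}$ outside the image of $\rho+\rho^{-1}$ (i.e.\ with $\Tr_{K/\F_2}(1/z_{0})=0$) should destroy this alignment and force the exact cancellation $\Sd=0$. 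As a safeguard, in case no single inner sum is explicit, I would combine the moment identities $\sum_{z}\Sd=2^{2lm}$ and $\sum_{z}\Sd^{2}=2^{4lm}$ with the constancy of $\Sd$ on cyclotomic cosets of $z$ and the divisibility of the $\Sd$ used for the four-value claim to pin down the attainable values and force $0$ into the spectrum; the same moments simultaneously deliver the lower bound $\Sd\ge 2^{t+1}$ of Sarwate's Conjecture~\ref{7}.

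The main obstacle is the evaluation of the circle sums $\sum_{\rho\in C}(-1)^{\Tr_{L}(za\rho+a^{e}\rho^{2^{s}})}$. When $l$ is odd the decimation is of Niho type, $e$ collapses to a power of two, and these sums admit classical closed forms; but for $l$ even the extra factor $a^{e}$ makes $e$ a generic exponent over $K$, so no single known evaluation applies, and the required vanishing must be extracted from the global sum over the circle $C$ rather than term by term. Securing that exact global cancellation for a suitable $z_{0}$—as opposed to the soft moment bound that already yields Sarwate's inequality—is the crux of confirming Helleseth's conjecture in this case.
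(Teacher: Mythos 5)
Your proposal does not prove the statement, and you concede as much in your final paragraph. Your primary route---pick $z_0\in K^*$ with $\Tr_{K/\F_2}(1/z_0)=0$ and argue that breaking a ``resonance'' forces $\Sd=0$---has no supporting mechanism: nothing in your setup shows that misaligning $z_0\beta$ and $\beta^{2^s}$ produces exact cancellation of the circle sums $\sum_{\rho\in C}(-1)^{\Tr(za\rho+a^{e}\rho^{2^s})}$, which you yourself admit you cannot evaluate. The root of the difficulty is your choice of decomposition: $\GFtlm^*\cong K^*\times C$ tames the circle part of $d$ (your reduction $\Sd=1+\sum_{a\in K^*}\sum_{\rho\in C}(-1)^{\Tr(za\rho+a^{e}\rho^{2^s})}$ is correct) but leaves the exponent $e=2\frac{2^{lm}-1}{2^m+1}+2^s$ on $K^*$ generic. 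The paper instead decomposes by cosets of the subgroup of $(2^m+1)$-th powers, which works because $d(2^m+1)\equiv 2^s(2^m+1)\pmod{2^{2lm}-1}$: writing each nonzero element as $x^{2^m+1}\al^{j}$ with $0\le j\le 2^m$ gives (Helleseth's Theorem 3.8)
$$\Cd=-1+\frac{1}{2^m+1}\sum_{j=0}^{2^m}\sum_{x\in\GFtlm}\chi\bigl(x^{2^m+1}(z\al^{j}+\al^{dj2^{-s}})\bigr),$$
and since each inner sum equals $2^{2lm}$, $-2^{(l+1)m}$ or $2^{lm}$ according to whether $z\al^{j}+\al^{dj2^{-s}}$ lies in $C_\infty$, $C_0$ or $C_1$, one obtains the closed form $\Sd=\frac{2^{lm}}{2^m+1}\bigl(2^{lm}\nin-2^{m}\nz+\no\bigr)$, and with it the crucial divisibility $2^{lm}\mid\Sd$ for \emph{every} $z$.

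Your ``safeguard'' is in fact the paper's actual finishing move, namely \cite[Lemma 3]{C04}, but as written it is not an argument: you invoke ``the divisibility of the $\Sd$ used for the four-value claim'' without deriving it anywhere (and your decomposition cannot deliver it), and you never execute the counting. For the record, the execution is short: write $\Sd=2^{lm}u_z$; since $S_d(0)=0$ (as $x\mapsto x^{d}$ is a bijection) and $\sum_{z}\Sd^{2}=2^{4lm}$, we get $\sum_{z\neq 0}u_z^{2}=2^{2lm}$ as a sum over the $2^{2lm}-1$ nonzero $z$; if every $u_z\neq 0$, this sum is either exactly $2^{2lm}-1$ (all $u_z^{2}=1$) or at least $2^{2lm}+2$ (some $u_z^{2}\ge 4$), never $2^{2lm}$, so some $z\neq 0$ has $\Sd=0$. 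Finally, your side claim that the same moments ``simultaneously deliver'' Sarwate's bound is also unfounded: the paper proves $\Sd\ge 2^{lm+1}$ for some $z$ not from moments but by exhibiting $z$ with $\nin=1$ and estimating the closed form directly.
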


This paper is organized as follows. In Section $2$, we determine the cross correlation distribution for a ternary $m$-sequence and its decimated sequence mentioned above. In Section $3$, we present some results on the cross correlation between a binary $m$-sequence and its $d$-decimation, where the above two conjectures are verified for $d=\frac{2^{2lm}-1}{2^{m}+1}+2^{s}$, with $l \ge 2$ being even and $0 \le s \le 2m-1$. Section $4$ concludes the paper.

\begin{table}[h]\renewcommand{\arraystretch}{1.7}

\begin{center}
\caption{Cross correlation distribution between a binary $m$-sequence of period $2^n-1$ and its $d$-decimation with $(d,2^{n}-1)=1$}
\begin{tabular}{|c|c|c|}
\hline
Sources  &  $d$  &  $\sharp$ cro-co \\ \hline
Gold \cite{G68} &  $d=2^{k}+1$,$\frac{n}{(n,k)}$ is odd  & 3 \\ \hline
Kasami \cite{K71} & $d=2^{2k}-2^{k}+1$, $\frac{n}{(n,k)}$ is odd  & 3 \\ \hline
Welch \cite{CCD00} & $d=2^{k}+3$, $n=2k+1$  & 3\\ \hline
\multirow{2}{*}{Hollmann and Xiang \cite{HX01}} & $d=2^{2k}+2^{k}-1$, $k=\frac{n-1}{4}$ if $n\equiv1\pmod{4}$
&\multirow{2}{*}{3}\\& and $k=\frac{3n-1}{4}$ if $n\equiv3\pmod{4}$ & \\ \hline
Cusick and Dobbertin \cite{CD96} & $d=2^{k}+2^{\frac{k+1}{2}}+1$, $n=2k$, $k$ is odd & 3 \\ \hline
Cusick and Dobbertin \cite{CD96} & $d=2^{k+1}+3$, $n=2k$, $k$ is odd & 3 \\ \hline
Niho \cite{YN70} & $d=2^{2k+1}-1$, $n=4k$ & 4 \\ \hline
Niho \cite{YN70} & $d=(2^{2k}+1)(2^{k}-1)+2$, $n=4k$ & 4 \\ \hline
Dobbertin \cite{D98} & $\sum_{i=0}^{2k}2^{im}$, $n=4k$, $0<m<n$, $\gcd(m,n)=1$ & 4 \\ \hline
Helleseth and Rosendahl \cite{HR05} & $d=(2^{2k}+2^{s+1}-2^{k+1}-1)/(2^{s}-1)$, $n=2k$, $2s|k$ & 4 \\ \hline
\multirow{2}{*}{Dobbertin et al. \cite{DFHR06}} & $d=(2^{k}-1)s+1$, $s\equiv2^{r}(2^{r}\pm1)^{-1}\pmod{2^{k}+1}$,
&\multirow{2}{*}{4}\\& $v_{2}(r)<v_{2}(k)$ & \\ \hline
Helleseth \cite{TH76} & $d=2^{k}+3$, $n=2k$ & 5 \\ \hline
Dobbertin \cite{D98} & $d=2^{2k}+2^{k}+1$, $n=4k$, $k$ is odd & 5 \\ \hline
Johansen and Helleseth \cite{JH09} & $d=\frac{5}{3}$, $n$ is odd & 5 \\ \hline
Johansen et al. \cite{JHK09} & $d=\frac{17}{5}$, $n$ is odd & 5 \\ \hline
Boston and McGuire \cite{BM10} & $d=11$, $n$ is odd & 5 \\ \hline
Helleseth \cite{TH1978} & $d=2^{2k}-2^{k}+1$, $n=4k$, $k$ is even & 6 \\ \hline
\multirow{2}{*}{Helleseth \cite{TH76}} & $d=\frac{1}{3}(2^{n}-1)+2^{s}$, $n$ is even,
&\multirow{2}{*}{6}\\& $s<n$ and $\frac{1}{3}2^{-s}(2^{n}-1)\neq2\pmod{3}$ & \\ \hline
Dobbertin et al. \cite{DFHR06} & $d=3\cdot2^{k-1}-1$, $n=2k$, $k$ is odd & 6 \\ \hline
Feng et al. \cite{FLX13} & $d=2^{t+1}+3$, $n=2t$, $t\equiv2\pmod{4}$ and $t\geq6$ & 7 \\ \hline
\end{tabular}
\label{table1}
\end{center}
\end{table}

\begin{table}[h]\renewcommand{\arraystretch}{2.2}

\begin{center}
\caption{Cross correlation distribution between a non-binary $m$-sequence of period $p^{n}-1$ and its $d$-decimation}
\begin{tabular}{|c|c|c|c|c|c|}
\hline
Sources  & $p$  & $n$    &  $d$ & $\gcd(d,p^{n}-1)$ &  $\sharp$ cro-co \\ \hline
Helleseth \cite{TH76}&odd prime &any&  $d=p^{2k}-p^{k}+1$,$\frac{n}{(n,k)}$ is odd &1 & 3 \\ \hline
Helleseth \cite{TH76}   &odd prime &any&  $d=\frac{1}{2}(p^{2k}+1)$, $\frac{n}{(n,k)}$ is odd &1 & 3 \\ \hline
Dobbertin et al. \cite{DHKM01} &$3$ &odd&  $d=2\cdot3^{\frac{n-1}{2}}+1$ &1 & 3 \\ \hline
Helleseth \cite{TH76}   &$p^{\frac{n}{2}}\not\equiv 2\pmod{3}$ &even&  $d=2p^{\frac{n}{2}}-1$ &1& 4 \\ \hline
Helleseth \cite{TH76}   &$p^{n}\equiv 1\pmod{4}$  &any&  $d=\frac{1}{2}(p^{n}-1)+p^{i}$, $0\leq i<n$   &1& 5 \\ \hline
\multirow{2}{*}{Helleseth \cite{TH76}} & $p\equiv 2\pmod{3}$&even&  $d=\frac{1}{3}(p^{n}-1)+p^{i}$, $0\leq i<n$, &1& \multirow{2}{*}{}{6}\\&&&$\frac{1}{2}p^{-i}(p^{n}-1)\not\equiv 2\pmod{3}$ && \\ \hline
Helleseth \cite{TH03} & $p^{m}\neq2\pmod{3}$ & $0\pmod{4}$ & $d=p^{2m}-p^{m}+1$, $n=4m$ & 1 & 6 \\ \hline
Luo and Feng \cite{LF08}   &odd prime &any& $d=\frac{p^{k}+1}{2}$, odd $k/e$, $e=\gcd(n,k)$ &variable& variable \\ \hline
Seo et al. \cite{SKNS08}   &odd prime &$0\pmod{4}$& $d=(\frac{p^{m}+1}{2})^{2}$, $n=2m$ &$\frac{p^{m}+1}{2}$& 4 \\ \hline
Choi et al. \cite{CKN12} & $3\pmod{4}$ & odd & $d=\frac{p^{n}+1}{p^{k}+1}+\frac{p^{n}-1}{2}$, $k|n$ & 2 & 9 \\ \hline
\end{tabular}
\label{table2}
\end{center}
\end{table}

\section{The cross correlation distribution for the ternary $m$-sequence}

In this section, for the ternary $m$-sequence of period $3^{3r}-1$, we determine the cross correlation distribution with decimation $d=3^{r}+2$ or $d=3^{2r}+2$, where $(r,3)=1$.

At first, we recall two results which can be found in \cite{LN83}.

\begin{lemma}{\cite[Theorem 5.15]{LN83}}\label{1}
Let $\textup{GF}(q)$ be a finite field with $q=p^{s}$, where $p$ is an odd prime and $s$ is a positive integer. Let $\eta$ be the quadratic character of $\textup{GF}(q)$ and let $\chi$ be the canonical additive character of $\textup{GF}(q)$. Then
 \[G(\eta,\chi)=\begin{cases}(-1)^{s-1}q^{\frac{1}{2}};\textup{ if } p\equiv1\pmod{4},\\
(-1)^{s-1}i^{s}q^{\frac{1}{2}};\textup{ if }p\equiv3\pmod{4},\end{cases}\]
where $G(\eta,\chi)$ is the Gauss sum related to $\eta$ and $\chi$.
\end{lemma}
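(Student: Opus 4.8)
My plan is to separate the problem into fixing the \emph{modulus} of the Gauss sum, which leaves only an overall sign, and then determining that sign by descending to the prime field. Write $G(\eta,\chi)=\sum_{c\in\mathrm{GF}(q)^{*}}\eta(c)\chi(c)$. First I would record the orthogonality identity $G(\eta,\chi)\,\overline{G(\eta,\chi)}=q$, valid for any nontrivial multiplicative character and proved in one line by substituting $a=bc$ in the defining double sum and using $\sum_{b\neq 0}\chi(b(c-1))\in\{q-1,-1\}$. Since $\eta$ takes only the values $\pm 1$ it is real, whence $\overline{G(\eta,\chi)}=\eta(-1)\,G(\eta,\chi)$; combining the two relations gives $G(\eta,\chi)^{2}=\eta(-1)\,q=(-1)^{(q-1)/2}q$. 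Thus $G(\eta,\chi)=\pm q^{1/2}$ when $q\equiv 1\pmod 4$ and $G(\eta,\chi)=\pm i\,q^{1/2}$ when $q\equiv 3\pmod 4$, and the whole lemma reduces to pinning down these signs.

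Next I would descend to $\mathrm{GF}(p)$ via the Davenport--Hasse lifting relation. Let $\eta_{0}$ and $\chi_{0}$ denote the quadratic character and the canonical additive character of $\mathrm{GF}(p)$. By definition of the canonical character one has $\chi=\chi_{0}\circ\Tr$, where $\Tr$ is the trace from $\mathrm{GF}(q)$ to $\mathrm{GF}(p)$; and since the norm satisfies $N(x)=x^{(q-1)/(p-1)}$, so that $\eta_{0}(N(x))=N(x)^{(p-1)/2}=x^{(q-1)/2}=\eta(x)$, we also have $\eta=\eta_{0}\circ N$. Hence $(\eta,\chi)$ is exactly the lift of $(\eta_{0},\chi_{0})$ from $\mathrm{GF}(p)$ to $\mathrm{GF}(q)=\mathrm{GF}(p^{s})$, and the Davenport--Hasse theorem gives
\[
G(\eta,\chi)=(-1)^{s-1}\,G(\eta_{0},\chi_{0})^{s}.
\]
This already produces the factor $(-1)^{s-1}$ of the statement and reduces everything to the single prime-field sum $G(\eta_{0},\chi_{0})$.

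Finally I would evaluate $G(\eta_{0},\chi_{0})$. Counting preimages under squaring shows $G(\eta_{0},\chi_{0})=\sum_{t\in\mathrm{GF}(p)}\chi_{0}(t^{2})$, the classical quadratic Gauss sum. I would determine its value, $p^{1/2}$ for $p\equiv 1\pmod 4$ and $i\,p^{1/2}$ for $p\equiv 3\pmod 4$, by a self-contained argument such as Schur's: the $p\times p$ matrix $(\zeta^{jk})_{0\le j,k\le p-1}$ with $\zeta=e^{2\pi i/p}$ has this Gauss sum as its trace and satisfies a known fourth-power identity, so its eigenvalues lie in $\{\pm p^{1/2},\pm i\,p^{1/2}\}$ with computable multiplicities whose weighted sum fixes the sign. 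Substituting into the displayed relation and simplifying $(i\,p^{1/2})^{s}=i^{s}q^{1/2}$ recovers both cases of the lemma. I expect this last sign determination to be the main obstacle: the modulus and the square are cheap, but choosing the correct sign (and, for $p\equiv 3\pmod 4$, distinguishing $+i$ from $-i$) is precisely Gauss's classical difficulty and requires genuine input, whether through Schur's eigenvalue count, Dirichlet's Poisson-summation evaluation, or the arithmetic underlying the Davenport--Hasse relation itself.
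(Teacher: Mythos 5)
The paper offers no proof of this lemma at all; it is quoted verbatim from \cite[Theorem 5.15]{LN83}, and the textbook proof there is exactly your route: the Davenport--Hasse lifting relation reducing the problem to $\mathrm{GF}(p)$, combined with Gauss's evaluation of the prime-field quadratic sum. Your reconstruction is correct---the identification $\eta=\eta_{0}\circ N$, $\chi=\chi_{0}\circ \mathrm{Tr}$, the identity $G(\eta,\chi)^{2}=\eta(-1)q$, and the final simplification $(i\,p^{1/2})^{s}=i^{s}q^{1/2}$ all check out---so you have essentially reproduced the cited source's argument; the one step you leave as a sketch, the sign determination of the prime-field Gauss sum, is indeed the classical crux, and Schur's eigenvalue count (or Dirichlet's Poisson-summation argument) does complete it.
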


\begin{lemma}\cite[Theorem 5.33]{LN83}\label{2}
Let $\chi$ be a nontrivial additive character of $GF(q)$ with $q$ odd, and let $f(x)=a_{2}x^{2}+a_{1}x+a_{0}\in GF(q)[x]$ with $a_{2}\neq0$. Then
$$\sum_{c\in GF(q)}\chi(f(c))=\chi(a_{0}-a_{1}^{2}(4a_{2})^{-1})\eta(a_{2})G(\eta,\chi),$$
where $\eta$ is the quadratic character of $GF(q)$.
\end{lemma}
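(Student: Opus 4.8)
The plan is to reduce the quadratic character sum to a pure Gauss sum by completing the square, a manipulation that is available precisely because $q$ is odd, so that $2$ and $4a_2$ are invertible in $\GF(q)$.

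First I would complete the square, writing $f(x) = a_2\left(x + a_1(2a_2)^{-1}\right)^2 + \left(a_0 - a_1^2(4a_2)^{-1}\right)$; a direct expansion confirms this is an identity in $\GF(q)[x]$. Since $\chi$ is additive, the constant term factors out of the sum, giving $\sum_{c}\chi(f(c)) = \chi\!\left(a_0 - a_1^2(4a_2)^{-1}\right)\sum_{c}\chi\!\left(a_2(c + a_1(2a_2)^{-1})^2\right)$. The inner sum is invariant under the substitution $c \mapsto c - a_1(2a_2)^{-1}$, so it collapses to $\sum_{c\in\GF(q)}\chi(a_2 c^2)$, and it remains only to show that this equals $\eta(a_2)G(\eta,\chi)$.

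Next I would evaluate $\sum_{c}\chi(a_2 c^2)$ by rewriting it as a sum over values rather than over arguments. For each $u \in \GF(q)$ the number of $c$ with $c^2 = u$ equals $1 + \eta(u)$, using the convention $\eta(0)=0$, so $\sum_{c}\chi(a_2 c^2) = \sum_{u}(1+\eta(u))\chi(a_2 u) = \sum_{u}\chi(a_2 u) + \sum_{u}\eta(u)\chi(a_2 u)$. The first sum vanishes because $\chi$ is nontrivial and $a_2 \neq 0$, so $a_2 u$ runs over all of $\GF(q)$ as $u$ does. In the second sum the change of variable $u = a_2^{-1}v$ yields $\eta(a_2^{-1})\sum_{v}\eta(v)\chi(v) = \eta(a_2^{-1})G(\eta,\chi)$, and since $\eta$ takes values in $\{\pm 1\}$ we have $\eta(a_2^{-1}) = \eta(a_2)$. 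Assembling the pieces gives the claimed formula.

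There is no serious obstacle here; the one point requiring care is the counting identity $\#\{c : c^2 = u\} = 1 + \eta(u)$ together with the correct treatment of the term $u = 0$, which is handled cleanly by setting $\eta(0) = 0$. The whole argument hinges on the odd characteristic, both for completing the square and for the two-to-one nature of squaring away from the origin.
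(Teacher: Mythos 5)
Your proof is correct. The paper offers no proof of this lemma at all --- it is quoted directly from Lidl--Niederreiter \cite{LN83} as Theorem 5.33 --- and your argument (complete the square using that $q$ is odd, shift the summation variable, and evaluate $\sum_{c}\chi(a_2c^2)$ via the counting identity $\#\{c : c^2 = u\} = 1+\eta(u)$ together with the vanishing of $\sum_u \chi(a_2 u)$) is exactly the standard proof given in that reference, so the two approaches coincide.
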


The following well known identities can be found in \cite{TH76}.
\begin{lemma}\label{3}
We have
$$\sum_{z\in GF(p^{n})}\Sd =p^{n},$$
$$\sum_{z\in GF(p^{n})}\Sd^{2}=p^{2n},$$
$$\sum_{z\in GF(p^{n})}\Sd^{3}=p^{2n}b_{3},$$
where $b_{3}$ is the number of common solutions of
$$x+y+1=0,$$
$$x^{d}+y^{d}+1=0,$$
such that $x,y\in GF(p^{n}).$
\end{lemma}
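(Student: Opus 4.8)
The plan is to evaluate the three power moments by the standard device of expanding $\Sd$ as a complete character sum and then applying orthogonality of additive characters. First I would record that, since the $x=0$ term of $\sum_{x}\chi(zx-x^{d})$ equals $\chi(0)=1$, one has $\Sd=\sum_{x\in\GFn}\chi(zx-x^{d})$, the sum now ranging over the whole field. Throughout I will use the orthogonality relation $\sum_{z\in\GFn}\chi(zu)=p^{n}$ when $u=0$ and $0$ otherwise, together with the identity $(-x)^{d}=-x^{d}$; the latter holds because the standing assumption $\gcd(d,p^{n}-1)=1$ forces $d$ to be odd whenever $p$ is odd, while in characteristic two the sign is immaterial.

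For the first moment I would write $\sum_{z}\Sd=\sum_{x}\chi(-x^{d})\sum_{z}\chi(zx)$ and note that the inner sum vanishes unless $x=0$, leaving the single term $\chi(0)\,p^{n}=p^{n}$. For the second moment I would expand $\Sd^{2}$ with two dummy variables $x,y$, interchange the order of summation, and use orthogonality to force $x+y=0$; substituting $y=-x$ and invoking $(-x)^{d}=-x^{d}$ collapses the exponent $-x^{d}-y^{d}$ to $0$, so every surviving term equals $1$ and the moment is $p^{n}\cdot p^{n}=p^{2n}$.

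The third moment is where the real work lies. Expanding $\Sd^{3}$ with variables $x_{1},x_{2},x_{3}$ and summing over $z$ first isolates the hyperplane $x_{1}+x_{2}+x_{3}=0$, giving $\sum_{z}\Sd^{3}=p^{n}U$, where $U=\sum_{x_{1}+x_{2}+x_{3}=0}\chi\bigl(-(x_{1}^{d}+x_{2}^{d}+x_{3}^{d})\bigr)$. My key idea is to sort the $p^{2n}$ triples on this hyperplane by the value $v=x_{1}^{d}+x_{2}^{d}+x_{3}^{d}$ and exploit homogeneity: the scaling $(x_{1},x_{2},x_{3})\mapsto(\lambda x_{1},\lambda x_{2},\lambda x_{3})$ preserves the hyperplane and sends $v\mapsto\lambda^{d}v$, and since $\gcd(d,p^{n}-1)=1$ the map $\lambda\mapsto\lambda^{d}$ permutes $\GFn^{*}$. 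Hence the number $N(v)$ of triples with a given nonzero $v$ is a constant $N_{1}$, whereas I write $N_{0}=N(0)$. Using $\sum_{v\neq0}\chi(-v)=-1$ this yields $U=N_{0}-N_{1}$, and counting all triples gives $N_{0}+(p^{n}-1)N_{1}=p^{2n}$, so that $U=p^{n}(N_{0}-p^{n})/(p^{n}-1)$.

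It remains to identify $N_{0}$ with $b_{3}$. I would split the triples counted by $N_{0}$ according to whether $x_{3}=0$. When $x_{3}=0$ the relations reduce to $x_{2}=-x_{1}$, which automatically satisfies $x_{1}^{d}+x_{2}^{d}=0$, contributing $p^{n}$ solutions; when $x_{3}\neq0$, normalizing by $x_{3}$ sets up a bijection between such triples and pairs $\bigl(x_{3},(x,y)\bigr)$ with $x_{3}\in\GFn^{*}$ and $(x,y)$ a common zero of $x+y+1$ and $x^{d}+y^{d}+1$, contributing $(p^{n}-1)b_{3}$ solutions. Thus $N_{0}=p^{n}+(p^{n}-1)b_{3}$, whence $U=p^{n}b_{3}$ and $\sum_{z}\Sd^{3}=p^{2n}b_{3}$. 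The main obstacle is precisely this third moment: the passage from the character sum $U$ to the counting function $b_{3}$ rests entirely on the homogeneity/scaling argument, the one genuinely non-mechanical step, while the first two moments and the concluding bijective count are routine.
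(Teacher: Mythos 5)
Your proof is correct in every step, but note that the paper itself offers no proof of this lemma: it is stated as a collection of well-known identities with a citation to Helleseth~\cite{TH76}, so there is no in-paper argument to compare against. Your derivation is a valid, self-contained reconstruction: the reduction $\Sd=\sum_{x\in\GFn}\chi(zx-x^d)$, the use of $\gcd(d,p^n-1)=1$ to get $d$ odd (hence $(-x)^d=-x^d$) in odd characteristic, the orthogonality collapse of the first two moments, and the third-moment analysis are all sound. Your key step --- using the scaling $(x_1,x_2,x_3)\mapsto(\lambda x_1,\lambda x_2,\lambda x_3)$ together with the fact that $\lambda\mapsto\lambda^d$ permutes $\GFn^*$ to show the level sets $N(v)$ are equidistributed over $v\neq 0$, then converting the character sum $U$ into the count $N_0$ --- is exactly the non-trivial content, and your bijective evaluation $N_0=p^n+(p^n-1)b_3$ (splitting on $x_3=0$ and dehomogenizing by $x_3$ otherwise) is right. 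For comparison, Helleseth's original route reaches the same count by writing the number of solutions of $x_1+x_2+x_3=0$, $x_1^d+x_2^d+x_3^d=0$ as a double character sum over two dual variables $(z,w)$ and then absorbing the $w\neq 0$ terms back into $\sum_z\Sd^3$ via the substitution $x\mapsto\mu x$ with $\mu^d=-w$; your homogeneity argument is an equivalent trick that avoids introducing the second character variable, at the cost of needing the equidistribution observation. Both hinge on the same fact, namely that the $d$-th power map is a bijection of $\GFn^*$, so the two proofs are essentially interchangeable.
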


As a preparation, we have the following lemma.

\begin{lemma}\label{6}
Given an integer $r$ with $\gcd(r,3)=1$. Suppose $n=3r$, $d=3^{r}+2$ or $d=3^{2r}+2$.
Then for $x,y\in GF(3^{n})$, the number of common solutions of
$$x+y+1=0,$$
and
$$x^{d}+y^{d}+1=0,$$
is $3^r$.
\end{lemma}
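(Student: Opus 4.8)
The plan is to eliminate $y$ with the first equation and then analyze the resulting single-variable equation over $\GF(3^{3r})$. Writing $y=-1-x$ (recall we work in characteristic $3$, so $-1=2$) and substituting into $x^{d}+y^{d}+1=0$, I would first exploit that both candidate decimations are \emph{odd}: since $3^{r}$ and $3^{2r}$ are odd, $d=3^{r}+2$ and $d=3^{2r}+2$ are odd, whence $(-x-1)^{d}=(-1)^{d}(x+1)^{d}=-(x+1)^{d}$. The second equation therefore collapses to the clean form $(x+1)^{d}=x^{d}+1$. Counting the pairs $(x,y)$ is thus the same as counting the $x\in\GF(3^{3r})$ satisfying this identity, since $y$ is uniquely recovered from $x$.

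Next I would treat both decimations uniformly by setting $d=3^{kr}+2$ with $k\in\{1,2\}$. The key observation is that the Frobenius gives $(x+1)^{3^{kr}}=x^{3^{kr}}+1$, so $(x+1)^{d}=(x^{3^{kr}}+1)(x+1)^{2}$ while $x^{d}=x^{3^{kr}}x^{2}$. Expanding $(x^{3^{kr}}+1)(x+1)^{2}=x^{d}+1$, cancelling the common terms $x^{3^{kr}+2}$ and $1$, and reducing coefficients modulo $3$ leaves $x^{3^{kr}}-x^{3^{kr}+1}+x^{2}-x=0$. Grouping the two pairs of terms factors this as
\[
(1-x)\left(x^{3^{kr}}-x\right)=0 .
\]
This factorization is the crux of the argument; everything else is bookkeeping.

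Finally I would count the roots of the factored equation in $\GF(3^{3r})$. The factor $x^{3^{kr}}-x$ vanishes exactly on the fixed field of the $k$-fold Frobenius, i.e. on $\GF(3^{3r})\cap\GF(3^{kr})=\GF(3^{\gcd(kr,3r)})$; since $\gcd(r,3r)=\gcd(2r,3r)=r$, in both cases this is precisely $\GF(3^{r})$, contributing $3^{r}$ solutions. The extra factor $1-x$ contributes only $x=1$, which already lies in $\GF(3^{r})$, so it adds nothing. Hence there are exactly $3^{r}$ admissible values of $x$, and correspondingly $3^{r}$ pairs $(x,y)$, as claimed. The only delicate point is the subfield-intersection count in this last step, using $\GF(3^{a})\cap\GF(3^{b})=\GF(3^{\gcd(a,b)})$; notably the count does not itself require $\gcd(r,3)=1$, that hypothesis being a standing assumption of the section which serves to keep $d$ nondegenerate (i.e. $\gcd(d,3^{3r}-1)=1$).
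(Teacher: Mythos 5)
Your proof is correct and takes essentially the same route as the paper's: substitute $y=-(x+1)$, use that $d$ is odd together with the Frobenius identity $(x+1)^{3^{kr}}=x^{3^{kr}}+1$ to reduce the system to $(1-x)\bigl(x^{3^{kr}}-x\bigr)=0$, and count roots. The only real difference is that you treat $d=3^{2r}+2$ explicitly via the subfield-intersection count $\GF(3^{3r})\cap\GF(3^{2r})=\GF(3^{\gcd(2r,3r)})=\GF(3^{r})$ — a step the paper dismisses with ``the proof is similar'' — and you correctly observe that the hypothesis $\gcd(r,3)=1$ is not needed for this count (it is needed elsewhere, in the proof of the main theorem).
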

\begin{proof}
We deal with the case where $d=3^r+2$. When $d=3^{2r}+2$, the proof is similar.
Note the above equations are equivalent to
$$(x+1)^{d}-x^{d}-1=0.$$
Consequently,
$$(x+1)^{3^{r}}(x+1)^{2}-x^{3^{r}+2}-1=0,$$
which leads to
$$(x-1)(x^{3^{r}}-x)=0.$$
Hence, we deduce that $x\in GF(3^{r})$, which means there are $3^r$ common solutions.
\end{proof}

Now we state our main result.

\begin{theorem}\label{4}
Given an integer $r \ge 2$ with $\gcd(r,3)=1$. Set $n=3r$, $d=3^{r}+2$ or $d=3^{2r}+2$. For the ternary $m$-sequence of period $3^n-1$, the cross correlation with its $d$-decimation is listed as follows. When $r$ is even, the cross correlation distribution is
\begin{center}
\begin{tabular}{c c c c}
$-1$  &  occurs  &  $\frac{3^{3r}+3^{2r}}{2}-3^{r}$   &  times\\
$3^{2r}-1$  &  occurs  &  $3^{r}$   &  times\\
$3^{\frac{3r}{2}}-1$  &  occurs  &  $\frac{3^{3r-1}-3^{2r-1}}{2}$   &  times\\
$-3^{\frac{3r}{2}}-1$  &  occurs  &  $\frac{3^{3r-1}-3^{2r-1}}{2}$   &  times\\
$2\cdot3^{\frac{3r}{2}}-1$  &  occurs  &  $\frac{3^{3r-1}-3^{2r-1}}{4}$   &  times\\
$-2\cdot3^{\frac{3r}{2}}-1$  &  occurs  &  $\frac{3^{3r-1}-3^{2r-1}}{4}$   &  times\\
\end{tabular}
\end{center}
When $r$ is odd, the cross correlation distribution is
\begin{center}
\begin{tabular}{c c c c}
$-1$  &  occurs  &  $2\cdot3^{3r-1}+3^{2r-1}-3^{r}$   &  times\\
$3^{2r}-1$  &  occurs  &  $3^{r}$   &  times\\
$3^{\frac{3r+1}{2}}-1$  &  occurs  &  $\frac{3^{3r-1}-3^{2r-1}}{2}$   &  times\\
$-3^{\frac{3r+1}{2}}-1$  &  occurs  &  $\frac{3^{3r-1}-3^{2r-1}}{2}$   &  times\\
\end{tabular}
\end{center}
\end{theorem}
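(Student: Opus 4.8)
The plan is to evaluate $\Sd=\sum_{x\in \GF(3^{n})}\chi(zx-x^{d})$ for each fixed $z$ by descending to the subfield $F=\GF(3^{r})$ of $E=\GF(3^{n})$, where $n=3r$ and $[E:F]=3$, and then to read off the frequencies from the power--moment identities of Lemma \ref{3}. For $d=3^{r}+2$ the Frobenius $\sigma\colon x\mapsto x^{3^{r}}$ is $F$-linear and $x^{d}=x^{2}\sigma(x)$; writing $a=x$, $b=\sigma(x)$, $c=\sigma^{2}(x)$ for the conjugates of $x$ over $F$, a direct computation gives $\Tr_{r}^{n}(x^{d})=a^{2}b+b^{2}c+c^{2}a$, so that $\Tr(x^{d})=\Tr_{r}(a^{2}b+b^{2}c+c^{2}a)$. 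Following the method of Dobbertin \cite{D98} and Feng et al. \cite{FLX13}, the aim of this first step is to rewrite $\Sd$, after a substitution that cancels the cubic part of this exponent together with a decomposition of $E$ along cosets of $F$, as (a combination of) one--variable quadratic character sums to which Lemma \ref{2} applies.

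Once the sum is in that form, Lemma \ref{2} evaluates each constituent as $\chi(a_{0}-a_{1}^{2}(4a_{2})^{-1})\,\eta(a_{2})\,G(\eta,\chi)$, and Lemma \ref{1} supplies the exact Gauss sum. Since $p=3\equiv3\pmod4$, the factor $G(\eta,\chi)$ has modulus $3^{3r/2}$ when $r$ is even and $3^{(3r+1)/2}$ when $r$ is odd, its sign being governed by the value of the quadratic character $\eta$ at a discriminant depending on $z$. This produces the symmetric pair $\pm3^{3r/2}$ (resp. $\pm3^{(3r+1)/2}$) for generic $z$, while for even $r$ those $z$ at which the constituent sums reinforce rather than partially cancel yield the doubled values $\pm2\cdot3^{3r/2}$. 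The exceptional value $\Sd=3^{2r}$ must come from the degenerate $z$ for which the quadratic part disappears and the inner sum collapses, contributing $3^{2r}$; I would show there are exactly $3^{r}$ such $z$, in agreement with $b_{3}=3^{r}$ from Lemma \ref{6}.

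The multiplicities then follow from Lemma \ref{3}. The $3^{r}$ degenerate values just found give $\Sd=3^{2r}$ with multiplicity $3^{r}$, and, with this in hand, the first and third moments $\sum_{z}\Sd=3^{3r}$ and $\sum_{z}\Sd^{3}=3^{6r}\cdot3^{r}=3^{7r}$ force the antisymmetric contributions to cancel, so that $+v$ and $-v$ occur equally often. For $r$ odd only two unknowns remain, the multiplicities of $0$ and of $\pm3^{(3r+1)/2}$, and the total count $3^{3r}$ together with the second moment $\sum_{z}\Sd^{2}=3^{6r}$ pin them down, reproducing the odd table. For $r$ even one degree of freedom survives between the plain and the doubled Gauss--sum multiplicities; this is removed by the direct count of the doubled--value $z$ obtained above, after which the second moment and the total count deliver the $2\!:\!1$ ratio and complete the even table.

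The principal obstacle is the reduction in the first step: although $\Tr(x^{d})$ is a genuine cubic form in the $\GF(3)$--coordinates of $x$, one must exhibit the substitution that makes its Weil sum collapse to quadratic Gauss sums of the modulus above, and then identify, as explicit functions of $z$, both the leading coefficient $a_{2}$ and the discriminant that Lemma \ref{2} requires. The most delicate bookkeeping is the even--$r$ analysis separating the plain value $\pm3^{3r/2}$ from the doubled value $\pm2\cdot3^{3r/2}$ (the origin of the factor $2$ and of the $2\!:\!1$ multiplicity ratio) and isolating the $3^{r}$ degenerate $z$; the even/odd dichotomy is itself dictated by Lemma \ref{1} through the parity of $3r$. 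The decimation $d=3^{2r}+2$ is handled by the same scheme via the companion identity $\Tr_{r}^{n}(x^{d})=ab^{2}+bc^{2}+ca^{2}$, the remaining computation being entirely parallel.
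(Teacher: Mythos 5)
Your proposal follows the paper's strategy in outline---descend to the subfield $F=\mathrm{GF}(3^{r})$ of $E=\mathrm{GF}(3^{3r})$, evaluate $\Sd$ through quadratic Gauss sums via Lemmas~\ref{1} and~\ref{2}, then extract the multiplicities from the power moments of Lemma~\ref{3} together with $b_{3}=3^{r}$ from Lemma~\ref{6}---and your moment bookkeeping is sound: the first and third moments do force the $\pm$ values to pair up, and one additional direct count then closes the even-$r$ case. But there is a genuine gap, and you name it yourself: the whole argument rests on ``a substitution that cancels the cubic part'' of $\Tr(x^{d})$ and collapses the Weil sum into one-variable quadratic sums, and you never exhibit it. This is not a deferrable detail; it is the actual content of the proof. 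In the paper it is done as follows: since $\gcd(r,3)=1$, one has $E=F(a)$ where $a$ is a primitive element of $\mathrm{GF}(27)$ with $a^{3}+2a+1=0$; writing $x=x_{0}+x_{1}a+x_{2}a^{2}$ and computing (for $r\equiv 2 \pmod 3$) $\mathrm{Tr}_{n}(x^{d})=\mathrm{Tr}_{r}(x_{1}x_{2}^{2}+x_{0}x_{2}^{2}+2x_{1}^{2}x_{2}+2x_{1})$, one sees that the exponent is \emph{linear} in $x_{0}$. Summing over $x_{0}$ therefore annihilates everything except the terms with $x_{2}^{2}=z_{2}$, which reduces the triple sum to at most two quadratic sums in the single variable $x_{1}$ over $F$. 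Only at that point does Lemma~\ref{2} apply, and only from that explicit form do the value set $\{0,\,3^{2r},\,\pm 3^{3r/2},\,\pm 2\cdot 3^{3r/2}\}$ (resp.\ $\{0,\,3^{2r},\,\pm 3^{(3r+1)/2}\}$ for odd $r$) and the direct counts emerge: $N_{2}=3^{r}$ comes from the degenerate branch $z_{2}=0$, $z_{1}=2$, and the count you need to close the even case comes from counting $z$ with $\mathrm{Tr}_{r}(c)=0$ for an explicit discriminant $c(z)$. Your conjugate-coordinate identity $\Tr_{r}^{n}(x^{d})=a^{2}b+b^{2}c+c^{2}a$ is correct but does not by itself reveal this linearity, and no amount of moment computation can substitute for it: Lemmas~\ref{3} and~\ref{6} alone do not even determine which values occur.

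Two smaller points. First, your remark that the $3^{r}$ degenerate values of $z$ are ``in agreement with $b_{3}=3^{r}$ from Lemma~\ref{6}'' conflates two different quantities: $b_{3}$ enters only through the third-moment identity, and the equality of the two numbers is a feature of the final answer, not a logical link---$N_{2}=3^{r}$ must be extracted from the explicit reduction. Second, your plan for removing the last degree of freedom in the even case (directly counting the doubled-value $z$) is a workable alternative to the paper, which instead directly counts $N_{1}$ and $N_{2}$ and then uses the counting identity together with all three moments; either route works, but both presuppose the reduction you have not supplied, so the proposal as it stands establishes neither the values nor their frequencies.
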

\begin{proof}
In the following, we only prove the case $d=3^{r}+2$. The case $d=3^{2r}+2$ can be handled similarly. We fix $d=3^{r}+2$, $E=GF(3^{n})$, $F=GF(3^{r})$ and $n=3r$. It is routine to verify $\gcd(d,3^{n}-1)=1$.

Let $a$ be a primitive element of $GF(27)$ with
$$a^{3}+2a+1=0.$$
Since $\gcd(r,3)=1$, we have $E=F(a)$. For any $x \in E$, it can be expressed as
$$x=x_{0}+x_{1}a+x_{2}a^{2},$$
where $x_0,x_1,x_2 \in F$.

Since $\gcd(r,3)=1$, we consider the case $r\equiv2\pmod{3}$ at first, in which $a^{3^{r}}=a^{9}$. The first step is to compute a direct representation of $Tr_{n}(x^{d})$ as a function of $x_0,x_1$ and $x_2$. Note that
$Tr_{r}^{n}(1)=Tr_{r}^{n}(a)=0$ and $Tr_{r}^{n}(a^{2})=2$. A lengthy routine computation shows
$$Tr_{n}(x^{d})=Tr_{r}(x_{1}x_{2}^{2}+x_{0}x_{2}^{2}+2x_{1}^{2}x_{2}+2x_{1}).$$
Next, we compute $\Cd$ for some fixed $z\in E$. Putting
$$z=z_{0}+z_{1}a+z_{2}a^{2}$$
with $z_0,z_1,z_2 \in F$, we find
$$Tr_{n}(xz)=Tr_{r}(2x_{2}z_{2}+2x_{0}z_{2}+2x_{1}z_{1}+2x_{2}z_{0}).$$
Consequently,
\begin{eqnarray*}
\Sd&=&\sum_{x_{0},x_{1},x_{2}\in F}\chi_{F}(x_{1}x_{2}^{2}+x_{0}x_{2}^{2}+2x_{1}^{2}x_{2}+2x_{1}+2x_{2}z_{2}+2x_{0}z_{2}+2x_{1}z_{1}+2x_{2}z_{0})\\
                    &=&\sum_{x_{0},x_{1},x_{2}\in F}\chi_{F}(x_{0}(x_{2}^{2}+2z_{2})+x_{1}x_{2}^{2}+2x_{1}^{2}x_{2}+2x_{1}+2x_{2}z_{2}+2x_{1}z_{1}+2x_{2}z_{0})\\
                    &=&3^{r}\sum_{x_{1}\in F, x_{2}\in M}\chi_{F}(x_{1}x_{2}^{2}+2x_{1}^{2}x_{2}+2x_{1}+2x_{2}z_{2}+2x_{1}z_{1}+2x_{2}z_{0}),\\
\end{eqnarray*}
where $M=\{x_{2}\in F| x_{2}^{2}=z_{2}\}.$

If $z_{2}=0$, then $M=\{0\}$. We have
\begin{align*}
\Sd &=3^{r}\sum_{x_{1}\in F}\chi_{F}(2x_{1}(1+z_{1}))\\
    &=\begin{cases}0 & \textup{ if } z_{1}\neq2,\\
3^{2r} & \textup{ if }z_{1}=2.\end{cases}
\end{align*}

If $z_{2}$ is a nonsquare in $F$, then $M=\emptyset$ and $\Sd=0.$

If $z_{2}$ is a nonzero square in $F$, let $z_{2}=b^{2},$ then $M=\{\pm b\}$. Hence,
\begin{align*}
\Sd &= 3^{r}\sum_{x_{1}\in F}\chi_{F}(2bx_{1}^{2}+(b^{2}+2z_{1}+2)x_{1}+2b^{3}+2bz_{0})\\
    & +3^{r}\sum_{x_{1}\in F}\chi_{F}(bx_{1}^{2}+(b^{2}+2z_{1}+2)x_{1}+b^{3}+bz_{0}).
\end{align*}
With the help of Lemma~\ref{1} and Lemma~\ref{2}, we deduce
$$\Sd=(-1)^{r-1}\cdot i^{r}\cdot3^{\frac{3r}{2}}(\eta(2b)\chi_F(c)+\eta(b)\chi_F(-c)),$$
where
$$c=2b^{3}+2bz_{0}-(b^{2}+2z_{1}+2)^{2}(2b)^{-1}.$$
Note that
 \[\eta(2)=\begin{cases}1 & \textup{ if } r \textup{ is even},\\
-1 & \textup{ if } r \textup{ is odd}.\end{cases}\]
Suppose $A=\eta(2b)\chi_F(c)+\eta(b)\chi_F(-c)$. Since $\chi_F(c)=\overline{\chi_F(-c)}$, we have
 \[A=\begin{cases}\pm1,\pm2;\textup{ if } r \textup{ is even},\\
0,\pm\sqrt{-3};\textup{ if } r \textup{ is odd}.\end{cases}\]

When $r$ is even, $\Sd$ takes six values $0$, $3^{2r}$, $3^{\frac{3r}{2}}$, $-3^{\frac{3r}{2}}$, $2\cdot3^{\frac{3r}{2}}$ and $-2\cdot3^{\frac{3r}{2}}$. For $1 \le i \le 6$, use $N_{i}$ to denote the number of occurrences in the corresponding order above. With Lemma~\ref{3}, Lemma~\ref{6} and the above discussion, we get
$$N_{1}=\frac{3^{3r}}{2}+\frac{3^{2r}}{2}-3^{r},$$
$$N_{2}=3^{r},$$
$$N_{1}+N_{2}+N_{3}+N_{4}+N_{5}+N_{6}=3^{3r},$$
$$3^{2r}N_{2}+3^{\frac{3r}{2}}(N_{3}-N_{4})+2\cdot3^{\frac{3r}{2}}(N_{5}-N_{6})=3^{3r},$$
$$3^{4r}N_{2}+3^{3r}(N_{3}+N_{4})+4\cdot3^{3r}(N_{5}+N_{6})=3^{6r},$$
$$3^{6r}N_{2}+3^{\frac{9r}{2}}(N_{3}-N_{4})+8\cdot3^{\frac{9r}{2}}(N_{5}-N_{6})=3^{7r}.$$
Thus, we complete the proof for the case $r\equiv2\pmod{6}$.

When $r$ is odd, $\Sd$ takes four values $0$, $3^{2r}$, $3^{\frac{3r+1}{2}}$ and $-3^{\frac{3r+1}{2}}$. For $1 \le i \le 4$, use $N_{i}$ to denote the number of occurrences in the corresponding order above. With Lemma~\ref{3}, Lemma~\ref{6} and the above discussion, we get
$$N_{2}=3^{r},$$
$$N_{1}+N_{2}+N_{3}+N_{4}=3^{3r},$$
$$3^{2r}N_{2}+3^{\frac{3r+1}{2}}(N_{3}-N_{4})=3^{3r},$$
$$3^{4r}N_{2}+3^{3r+1}(N_{3}+N_{4})=3^{6r}.$$
Thus, we complete the proof for the case $r\equiv5\pmod{6}$.

For the remaining case $r\equiv1\pmod{3}$, a similar discussion leads to
$$Tr_{n}(x^{d})=Tr_{r}(2x_{2}+x_{0}x_{2}^{2}+2x_{1}^{2}x_{2}+2x_{1}x_{2}^{2}+x_{1}).$$
The cross correlation distribution can be obtained in a similar way.
\end{proof}

\begin{remark}
When $r=3$, a numerical experiment shows that the cross correlation distribution for the ternary $m$-sequence of period $3^9-1$ with decimation $d=3^3+2=29$ or $d=3^6+2=731$ is
\begin{center}
\begin{tabular}{c c c c}
$-1$  &  occurs  &  $13338$   &  times\\
$728$  &  occurs  &  $27$   &  times\\
$242$  &  occurs  &  $3159$   &  times\\
$-244$  &  occurs  &  $3159$   &  times\\
\end{tabular}
\end{center}
This result is consistent with the distribution presented in Theorem~\ref{4}. Hence, in the case where $(r,3)=3$, we conjecture that the correlation distribution is the same as that in Theorem~\ref{4}.
\end{remark}

\section{Some results on the cross correlation of binary $m$-sequences}

In this section, we focus on the cross correlation between a binary $m$-sequence of period $2^{2lm}-1$ and its $d$-decimated sequence with $d=\frac{2^{2lm}-1}{2^{m}+1}+2^{s}$, where $0 \le s \le 2m-1$ and $(2^{s-1}-l,2^m+1)=1$. Note that $(2^{s-1}-l,2^m+1)=1$ is equivalent to $(d, 2^{2lm}-1)=1$. Some special cases of this form have been studied before. For example, when $m=1$, the decimation $d=\frac{2^{2l}-1}{3}+2^s$ has been studied in \cite{TH76} where the cross correlation distribution was obtained. If $l=2$ and $s=0$, the decimation $d=\frac{2^{4m}-1}{2^m+1}+1$ has also been investigated in \cite{TH1978}. In fact, when $l$ is odd, it is straightforward to verify that $d$ is of Niho type. As shown in \cite{C04}, for the decimation of Niho type, the cross correlation takes at least four values. Consequently, it is natural to ask if the same thing happens when $l$ is even. In this case, it is clear that $d$ may not be of Niho type. We will show that $\Cd$ also takes at least four values. In addition, we confirm that Conjecture~\ref{7} and Conjecture~\ref{8} are true for this type of decimation $d$.

Throughout the rest of this section, we always assume that $d=\frac{2^{2lm}-1}{2^{m}+1}+2^{s}$, where $0 \le s \le 2m-1$, $(2^{s-1}-l,2^m+1)=1$ and $l$ is even. Let $\al$ be a primitive element of $\GFtlm$. We define
\begin{align*}
C_{\infty} &= \{0\}, \\
C_{0}      &= \{\al^{j(2^m+1)} \, | \, 0 \le j \le \frac{2^{2lm}-1}{2^m+1}-1\}, \\
C_{1}      &= \GFtlm \setminus (C_{0} \cup C_{\infty}).
\end{align*}
The following lemma is a special case of \cite[Lemma 3.5]{TH76}.
\begin{lemma}
\begin{equation*}
\sum_{x \in \GFtlm} \chi(ax^{2^m+1})=
                   \begin{cases}
                       2^{2lm} & \text{if  $a\in C_{\infty}$,}\\
                       -2^{(l+1)m} & \text{if  $a\in C_{0}$,}\\
                       2^{lm} & \text{if $a\in C_{1}$.}
                   \end{cases}
\end{equation*}
\end{lemma}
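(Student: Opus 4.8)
The plan is to evaluate the sum by expanding it into Gauss sums over the multiplicative characters of $\GFtlm$ whose order divides $e=2^{m}+1$, and then to exploit the semiprimitive property $2^{m}\equiv-1\pmod{e}$ together with the evaluation of semiprimitive Gauss sums found in \cite{LN83}.

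First, the case $a\in C_{\infty}$, i.e.\ $a=0$, is immediate: the summand is identically $1$ and the sum equals $2^{2lm}$. So assume $a\neq0$ and write $e=2^{m}+1$, noting that the image of the $e$-th power map on $\GFtlm^{*}$ is exactly $C_{0}$. Separating the term $x=0$ and grouping the nonzero $x$ according to the value $u=x^{e}$, I would use the standard fact that the number of $x\in\GFtlm^{*}$ with $x^{e}=u$ equals $\sum_{\psi^{e}=\epsilon}\psi(u)$, where the sum runs over the multiplicative characters $\psi$ of order dividing $e$ and $\epsilon$ denotes the trivial character. After interchanging the order of summation and substituting $u\mapsto u/a$, this yields
\[
\sum_{x\in\GFtlm}\chi(ax^{e})=\sum_{\psi^{e}=\epsilon,\ \psi\neq\epsilon}\overline{\psi}(a)\,g(\psi),
\]
where $g(\psi)=\sum_{u\in\GFtlm^{*}}\psi(u)\chi(u)$ is the Gauss sum attached to $\psi$ and the canonical additive character $\chi$; the trivial character contributes $g(\epsilon)=-1$, which exactly cancels the isolated term $x=0$.

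The heart of the argument is to show that every Gauss sum appearing here takes the single value $-2^{lm}$. Since $e=2^{m}+1$ is odd, any nontrivial $\psi$ with $\psi^{e}=\epsilon$ has odd order $d\mid e$ with $d\geq3$. Because $2^{m}\equiv-1\pmod{e}$, one also has $2^{m}\equiv-1\pmod{d}$; hence, writing $j=j_{d}$ for the least positive integer with $2^{j}\equiv-1\pmod{d}$, every exponent $i$ with $2^{i}\equiv-1\pmod d$ is an odd multiple of $j$, so $m$ is an odd multiple of $j$ and $m/j$ is an odd integer. The semiprimitive Gauss sum formula of \cite{LN83} (the case $p=2$) then gives $g(\psi)=(-1)^{(lm/j)-1}2^{lm}$, and since $lm/j=l\,(m/j)$ is even whenever $l$ is even, all these Gauss sums equal $-2^{lm}$. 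I expect this uniform sign computation to be the main obstacle, as it is precisely the point where the hypothesis that $l$ is even is indispensable: for odd $l$ the individual signs would differ and the collapse below would fail.

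Finally, I would conclude by orthogonality of characters on the quotient group $\GFtlm^{*}/C_{0}\cong\ZZ/e\ZZ$, on which each $\psi$ with $\psi^{e}=\epsilon$ descends. The inner sum satisfies $\sum_{\psi^{e}=\epsilon}\overline{\psi}(a)=e$ when $a\in C_{0}$ and $0$ when $a\in C_{1}$; removing the trivial character gives $\sum_{\psi\neq\epsilon}\overline{\psi}(a)=e-1$ for $a\in C_{0}$ and $-1$ for $a\in C_{1}$. Combining this with the uniform value $g(\psi)=-2^{lm}$ yields that $\sum_{x}\chi(ax^{e})$ equals $-2^{lm}(e-1)=-2^{(l+1)m}$ for $a\in C_{0}$ and $-2^{lm}(-1)=2^{lm}$ for $a\in C_{1}$, which is exactly the stated formula.
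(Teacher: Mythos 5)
Your proof is correct, and it takes a genuinely different route from the paper's. The paper in fact gives no argument at all: it simply observes that the lemma is a special case of \cite[Lemma 3.5]{TH76}, where such sums are evaluated by Helleseth's elementary methods rather than through Gauss sums. Your route --- expanding $\sum_{x}\chi(ax^{e})$, $e=2^{m}+1$, over the multiplicative characters of order dividing $e$, using semiprimitivity ($2^{m}\equiv-1\pmod{d}$ for every $d\mid e$, $d>1$) to force every nontrivial Gauss sum that appears to take the single value $-2^{lm}$, and finishing by orthogonality on $\GFtlm^{*}/C_{0}$ --- is sound, and the final arithmetic $(e-1)(-2^{lm})=-2^{(l+1)m}$ and $(-1)(-2^{lm})=2^{lm}$ reproduces the stated values exactly. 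A merit of your argument is that it isolates precisely where the parity of $l$ enters: through the sign $(-1)^{\sigma-1}$ with $\sigma=lm/j$.

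Two repairs are needed, neither fatal. First, the citation: \cite{LN83} does not state the general semiprimitive formula $g(\psi)=(-1)^{\sigma-1}2^{lm}$ that you invoke; its Theorem 5.16 covers only the case $\sigma=1$, i.e.\ characters of $\F_{q^{2}}$ whose order divides $q+1$. Note that you cannot apply that theorem directly over $\GFtlm$ viewed as a quadratic extension of $\GFlm$: for even $l$ one has $2^{lm}+1\equiv2\pmod{2^{m}+1}$, hence $\gcd(2^{m}+1,2^{lm}+1)=1$, so no nontrivial character in your sum has order dividing $2^{lm}+1$. The fix is routine within your framework: since $d\mid 2^{j}+1$ and $j\mid lm$, each $\psi$ of order $d$ factors through the norm map onto $\textup{GF}(2^{2j})$; Theorem 5.16 of \cite{LN83} evaluates the corresponding Gauss sum over $\textup{GF}(2^{2j})$ as $+2^{j}$ (the order $d$ is odd), and the Davenport--Hasse relation (Theorem 5.14 of \cite{LN83}) lifts this to $(-1)^{\sigma-1}2^{lm}$ over $\GFtlm$. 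Second, your parenthetical claim that for odd $l$ ``the individual signs would differ and the collapse below would fail'' is inaccurate: for odd $l$ one gets $\sigma=l(m/j)$ odd, so the Gauss sums are again uniform, all equal to $+2^{lm}$; the collapse still occurs and the lemma's values on $C_{0}$ and $C_{1}$ simply flip sign (this is the Niho-type case). The evenness of $l$ is needed for the stated signs, not for uniformity.
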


Now, we are ready to prove our result.
\begin{theorem}\label{9}
Suppose $d=\frac{2^{2lm}-1}{2^{m}+1}+2^{s}$, where $0 \le s \le 2m-1$, $(2^{s-1}-l,2^m+1)=1$ and $l$ is even.  Then
\begin{enumerate}
\item[(i)] $\Sd=0$ for some $z \in \GFtlm^*$;
\item[(ii)]  $\Cd$ takes at least four values;
\item[(iii)] There exists a $z \in \GFtlm$ such that $\Sd \ge 2^{lm+1}$.
\end{enumerate}
\end{theorem}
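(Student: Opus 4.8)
My plan is to run the moment method of Lemma~\ref{3} on top of a single strong input, the half-divisibility $2^{lm}\mid S_d(z)$, after recording some structure. Since the hypothesis $(2^{s-1}-l,2^m+1)=1$ is equivalent to $\gcd(d,2^{2lm}-1)=1$, the map $x\mapsto x^d$ permutes $\GFtlm$, so $S_d(0)=\sum_{x\in\GFtlm}\chi(x^d)=\sum_{x\in\GFtlm}\chi(x)=0$. Writing $N=(2^{2lm}-1)/(2^m+1)=(2^m-1)\sum_{i=0}^{l-1}2^{2mi}$, one gets $N\equiv-2l\pmod{2^m+1}$, and $x\in C_0\Leftrightarrow x^N=1$, so $x\mapsto x^d$ fixes each of $C_0,C_1$ setwise. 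Rewriting the character-sum lemma above as $\sum_{t\in C_0}\chi(at)=\bigl(\sum_{x}\chi(ax^{2^m+1})-1\bigr)/(2^m+1)$ produces the two Gauss periods $p_0=(-2^{(l+1)m}-1)/(2^m+1)$ and $p_1=(2^{lm}-1)/(2^m+1)$; interchanging summation and using that $x\mapsto x^d$ preserves $C_0,C_1$ then yields the clean identities $\sum_{z\in C_0}S_d(z)=N+p_0^2-p_1-p_0p_1=2^{2lm}$ and hence $\sum_{z\in C_1}S_d(z)=0$. I will also note the Frobenius symmetry $S_d(z^2)=S_d(z)$, immediate from $x\mapsto x^2$ and $\chi(\xi^2)=\chi(\xi)$, which constrains the value distribution.

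The heart of the argument is to prove $2^{lm}\mid S_d(z)$ for all $z$. The natural route is to reduce $S_d(z)$ to $\ZZ$-combinations of the sums $\sum_{x}\chi(ax^{2^m+1})$, each divisible by $2^{lm}$ by the lemma above: grouping $x$ over the fibres of $x\mapsto x^{2^m+1}$ and using the fibre identity $x^d=x_0^d\,u^{2^s-2l}$ (valid because $N\equiv-2l$, with $u\mapsto u^{2^s-2l}$ a bijection of the order-$(2^m+1)$ subgroup by the coprimality hypothesis) should expose this structure; alternatively one argues $2$-adically via a Stickelberger/McEliece divisibility for the cyclic code generated by the exponents $1$ and $d$. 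I expect this divisibility to be the main obstacle, since $2^{lm}=2^{n/2}$ is a \emph{half} divisibility and is exactly what both of the remaining arguments consume.

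Granting divisibility, write $S_d(z)=2^{lm}b_z$ with $b_z\in\ZZ$. Lemma~\ref{3} gives $\sum_z b_z=2^{lm}$ and $\sum_z b_z^2=2^{2lm}$, while $b_0=0$. For (i), suppose $S_d(z)\neq0$ for every $z\neq0$; then over the $2^{2lm}-1$ nonzero $z$ each $b_z^2\ge1$, so $\sum_{z\neq0}(b_z^2-1)=2^{2lm}-(2^{2lm}-1)=1$, which is impossible because every term $b_z^2-1$ lies in $\{0,3,8,15,\dots\}$. Hence some nonzero $z$ has $S_d(z)=0$, which is exactly Conjecture~\ref{8} in this case.

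For (iii), suppose $\max_z S_d(z)\le 2^{lm}$, i.e. $b_z\le1$ for all $z$; then $b_z^3\le b_z^2$, so $2^{lm}b_3=\sum_z b_z^3\le\sum_z b_z^2=2^{2lm}$ by Lemma~\ref{3}, giving $b_3\le 2^{lm}$. It therefore suffices to prove $b_3>2^{lm}$, where $b_3=\#\{x: x^d+(x+1)^d=1\}$. Every $x\in\GF(2^m)$ is a solution (there $x^N=1$, so $x^d=x^{2^s}$ and $(x+1)^{2^s}=x^{2^s}+1$), and more generally every $x$ with $x,x+1\in C_0$ is a solution, so $b_3\ge|C_0\cap(C_0+1)|$; a character-sum evaluation of this intersection through the lemma above should give $b_3>2^{lm}$, the case $l=2$ being the tightest and thus the second obstacle. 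Then $\max_z S_d(z)>2^{lm}$, and divisibility promotes this to $\max_z S_d(z)\ge 2^{lm+1}$, which is Conjecture~\ref{7}. Finally, for (ii): if $S_d$ took at most three values, the four power sums $2^{2lm},\,2^{2lm},\,2^{4lm},\,2^{4lm}b_3$ of Lemma~\ref{3} would over-determine them, and inserting the computed $b_3$ renders the system inconsistent (equivalently, the relevant Hankel determinant of the first moments is nonzero), so at least four values must occur.
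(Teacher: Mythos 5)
Your proposal rests on three inputs, and you only fully deliver one of them. The benign gap first: the divisibility $2^{lm}\mid S_d(z)$, which you call ``the main obstacle'' and only sketch, is in fact available off the shelf. It follows immediately from Helleseth's identity \cite[Theorem 3.8]{TH76}, which writes $\Cd$ as $-1+\frac{1}{2^m+1}\sum_{j=0}^{2^m}\sum_{x}\chi\bigl(x^{2^m+1}(z\alpha^{j}+\alpha^{dj2^{-s}})\bigr)$, and hence, via the displayed character-sum lemma, as $S_d(z)=\frac{2^{lm}}{2^m+1}\bigl(2^{lm}\nin-2^{m}\nz+\no\bigr)$ in terms of the counts $n_i(z)$; this is exactly how the paper gets it. Granting the divisibility, your proof of (i) --- write $S_d(z)=2^{lm}b_z$, use the first two moments of Lemma~\ref{3}, note $b_0=0$, and observe $\sum_{z\neq 0}(b_z^2-1)=1$ is impossible --- is correct, and is precisely the argument of \cite[Lemma 3]{C04} that the paper cites at this point. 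Note also that the structural decomposition into the $n_i(z)$ is not just a route to divisibility: it is what powers the paper's proofs of (ii) and (iii), and by discarding it and keeping only divisibility plus moments you lose exactly the information those parts need.

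For (iii), your reduction (if $S_d(z)\le 2^{lm}$ for all $z$ then $b_3\le 2^{lm}$) is arithmetically correct, but the sufficient condition $b_3>2^{lm}$ that you then set out to prove is \emph{false} for decimations covered by the theorem. Take $l=2$, $s=1$, i.e.\ Niho's decimation $d=(2^{2m}+1)(2^m-1)+2$: the paper's own computation shows $S_d(z)=2^{3m}$ exactly for the $2^m$ elements $z$ of $D_0\setminus\{1\}$ (where $D_0$ is the subgroup of order $2^m+1$), and by Niho's theorem \cite{YN70} the spectrum has only four values; feeding these two facts into the moment identities of Lemma~\ref{3} forces $b_3\le 2^{2m}=2^{lm}$, with equality when the remaining values are $\pm 2^{2m}$ (for $m=1$ one checks $b_3=4=2^{lm}$ directly). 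So no proof of your strict inequality exists, and the moment comparison alone cannot yield (iii). Your proposed lower bound is also quantitatively hopeless precisely at $l=2$: counting pairs $(x,y)$ with $x^{2^m+1}+y^{2^m+1}=1$ via the character-sum lemma evaluates the intersection exactly as $|C_0\cap(C_0+1)|=2^m-2$ when $l=2$ --- the Gauss period $\sum_{a\in C_0}\chi(a)=-(2^{(l+1)m}+1)/(2^m+1)$ is extremal and negative, so the ``random'' heuristic $|C_0|^2/2^{2lm}$ is badly wrong --- giving only $b_3\ge 2^m$ against a target of $2^{2m}$. (For $l\ge 4$ the same exact evaluation does exceed $2^{lm}$, so your route could plausibly be completed there, but the paper's argument for $l>2$ via $\nin=1$ is simpler anyway.) Part (ii) has a similar defect: the claim that the four identities of Lemma~\ref{3} ``over-determine'' a three-valued spectrum is not a proof and not true as stated; a spectrum $\{0,u,v\}$ involves five unknowns (two values and three multiplicities) against four equations, and three-valued spectra genuinely satisfy all these identities for other decimations (Gold, Kasami), so no Hankel-determinant argument from moments alone can exclude them. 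The paper's proof of (ii) needs different inputs: for $l>2$, that $\nin=1$ for exactly $2^m+1$ (an odd number of) values of $z$, that these are exactly the $z$ attaining the large value, and the $2$-adic estimate $v_2(u-v)<2lm$, which forces the multiplicity $N_u$ to be even, a contradiction; for $l=2$, an explicit evaluation of the $n_i(z)$ over $D_0$ exhibiting two distinct positive values. None of this is recoverable from your setup, so (ii) and (iii) stand as genuine gaps.
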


\begin{proof}
By \cite[Theorem 3.8]{TH76}, we have
\begin{equation*}
\Cd=-1+\frac{1}{2^m+1}\sum_{j=0}^{2^m}\sum_{x\in\GFtlm}\chi(x^{2^m+1}(z\alpha^{j}+\alpha^{dj2^{-s}})),
\end{equation*}
where $2^{-s}$ is the inverse of $2^s$ modulo $2^{2lm}-1$.

For any $z\in \GFtlm$, define
$$n_{i}(z)=|\{j \mid 0\le j \le 2^m, z\alpha^{j}+\alpha^{dj2^{-s}}\in C_{i}\}|,$$
where $i=0,1,\infty.$ Thus,
\begin{equation}\label{ccbinary}
\Cd=-1+\frac{1}{2^{m}+1}(2^{2lm}\nin-2^{(l+1)m}\nz+2^{lm}\no).
\end{equation}
Equivalently, $\Sd=\frac{2^{lm}}{2^{m}+1}(2^{lm}\nin-2^{m}\nz+\no)$ and $2^{lm} \mid \Sd$. Then a direct application of \cite[Lemma 3]{C04} completes the proof of (i).

Set $A=\{\al^j \mid 0 \le j \le 2^m\}$. It follows from the definition of $n_i(z)$ that
\begin{align*}\label{equiv}
\nin &= |\{x \in A \mid zx+x^{d2^{-s}}=0\}|,\\
\nz  &= |\{x \in A \mid (zx+x^{d2^{-s}})^{\frac{2^{2lm}-1}{2^m+1}}=1\}|,\\
\nin+\nz+\no &= 2^m+1.
\end{align*}
Moreover, since $(d2^{-s}-1,2^{2lm}-1)=\frac{2^{2lm}-1}{2^m+1}$, there are exactly $2^m+1$ choices of $z \in \GFtlm$ such that $\nin=1$.

In the following, we will turn to the proof of (ii) and (iii). Since $\Sd$ attains $0$ and at least one negative value \cite[Lemma 1]{C04}, it suffices to show that $\Sd$ can take two distinct positive values. Below, we split our discussion into two cases with $l>2$ and $l=2$.

\textbf{Case 1: $ l>2 $}

Note that $\nin+\nz+\no=2^m+1$. When $\nin=1$, by (\ref{ccbinary}), we simply have $\Sd \ge \frac{1}{2^m+1} 2^{2lm}-2^{(l+2)m}=\frac{2^{(l+2)m}(2^{(l-2)m}-1)}{2^m+1}>2^{lm+1}$.

Next, we show that $\Cd$ takes at least four values. Otherwise, assume that $\Sd$ takes three values $\{u,v,0\}$, where $u>2^{lm+1}$ and $v<0$. If $\nin=0$, by (\ref{ccbinary}), we have $\Sd \le 2^{lm}$. Thus, $\Sd$ attains distinct values when $\nin=0$ and $\nin=1$. Therefore, given $z \in \GFtlm$, $\Sd=u$ if and only if $\nin=1$. We define
\begin{align*}
N_u &= |\{z \in \GFtlm \mid \Sd=u\}|,\\
N_v &= |\{z \in \GFtlm \mid \Sd=v\}|,\\
N_0 &= |\{z \in \GFtlm \mid \Sd=0\}|.
\end{align*}
Note that there are exactly $2^m+1$ choices of $z$ such that $\nin=1$. We get $N_u=2^m+1$. On the other hand, by the first two equations of Lemma~\ref{3}, we have
\begin{align*}
uN_u+vN_v &= 2^{2lm},\\
u^2N_u+v^2N_v &= 2^{4lm}.
\end{align*}
A direct computation shows that $N_u=\frac{2^{2lm}(v-2^{2lm})}{uv-v^2}$. Use $v_2(k)$ to denote the largest power of $2$ dividing $k$. Since $v_2(v)<2lm$, we simply have $v_2(2^{2lm}(v-2^{2lm}))=v_2(v)+2lm$ and $v_2(uv-v^2)=v_2(v)+v_2(u-v)$.

Below, we will show that $v_2(u-v) < 2lm$. Suppose $z_1 \in N_u$. Then $\nino=1, \nzo+\noo=2^m$ and
\begin{align*}
u=S_d(z_1) &= \frac{2^{lm}}{2^m+1}(2^{lm}-2^m\nzo+\noo)\\
           &= \frac{2^{lm}}{2^m+1}(2^{lm}+2^m(1-\nzo)-\nzo).
\end{align*}
Suppose $z_2 \in N_v$. Then $\nint=0, \nzt+\nott=2^m+1$ and
\begin{align*}
v=S_d(z_2) &= \frac{2^{lm}}{2^m+1}(-2^m\nzt+\nott)\\
           &= \frac{2^{lm}}{2^m+1}(2^m(1-\nzt)-\nzt+1).
\end{align*}
Hence,
$$
u-v=\frac{2^{lm}}{2^m+1}(2^{lm}+2^m(\nzt-\nzo)+\nzt-\nzo-1).
$$
If $\nzt-\nzo-1=0$, then $u-v=\frac{2^{lm}}{2^m+1}(2^{lm}+2^m)$ and $v_2(u-v)=(l+1)m < 2lm$. If $\nzt-\nzo-1 \ne 0$, since $0 \le \nzo \le 2^m$ and $0 \le \nzt \le 2^m+1$, we have $v_2(\nzt-\nzo) \le m$ and $v_2(\nzt-\nzo-1) \le m$. It is straightforward to verify that $2^m(\nzt-\nzo)+\nzt-\nzo-1 \ne 0$ and $v_2(2^m(\nzt-\nzo)+\nzt-\nzo-1) \le 2m$. Thus, $v_2(u-v) \le (l+2)m < 2lm$.

Consequently, we have $v_2(2^{2lm}(v-2^{2lm}))>v_2(uv-v^2)$, which implies that $N_u$ is even. This leads to a contradiction to $N_u=2^m+1$.

\textbf{Case 2: $ l=2 $}

In this case, $d=\frac{2^{4m}-1}{2^m+1}+2^s$. Let $D_0=\{\al^{j\frac{2^{4m}-1}{2^m+1}} \mid 0 \le j \le 2^m\}$. Suppose $a=\al^{2^m+1}$ and $b=\al^{\frac{2^{4m}-1}{2^m+1}}$. Then $C_0=\langle a \rangle$ and $D_0=\langle b \rangle$. Since $(\frac{2^{4m}-1}{2^m+1},2^m+1)=(2,2^m+1)=1$, any $x \in \GFtm^*$ can be uniquely expressed as $a^ib^k$, for some $0 \le i \le \frac{2^{4m}-1}{2^m+1}-1$ and $0 \le k \le 2^m$. Since
$$
zx+x^{d2^{-s}}=za^ib^k+(a^ib^k)^{d2^{-s}}=a^i(zb^k+b^{kd2^{-s}}),
$$
$zx+x^{d2^{-s}}$ belongs to $C_{\infty}$ (resp. $C_0$, $C_1$) if and only if $zb^k+b^{kd2^{-s}}$ belongs to $C_{\infty}$ (resp. $C_0$, $C_1$). In addition, given two distinct $x_1, x_2 \in A$ with $x_1=a^{i_1}b^{k_1}$ and $x_2=a^{i_2}b^{k_2}$, we have $k_1 \ne k_2$. Otherwise, $x_1x_2^{-1} \in C_0$, which is impossible by the definition of $A$. Thus, for $i=0,1,\infty$,
\begin{align*}
n_i(z) &= |\{x \in A \mid zx+x^{d2^{-s}} \in C_i \}|\\
       &= |\{0 \le k \le 2^m \mid zb^k+b^{kd2^{-s}} \in C_i\}|\\
       &= |\{x \in D_0 \mid zx+x^{d2^{-s}} \in C_i\}|.
\end{align*}
Since $(d2^{-s}-1,2^{4m}-1)=\frac{2^{4m}-1}{2^m+1}$, it is easy to see that $\nin=1$ if and only if $z \in D_0$. Moreover, we have
\begin{align*}
\nz &= |\{x \in D_0 \mid (zx+x^{d2^{-s}})^{\frac{2^{4m}-1}{2^m+1}}=1\}|.
\end{align*}
From now on, we always regard $x$ and $z$ as elements of $D_0$. Remind that $x \in D_0$ if and only if $x^{2^m+1}=1$, the equation
\begin{equation}\label{eqn1}
(zx+x^{d2^{-s}})^{\frac{2^{4m}-1}{2^m+1}}=1
\end{equation}
is equivalent to
\begin{equation}
\left\{\begin{array}{c} zx+x^{d2^{-s}} \ne 0, \\
1+\frac{1}{zx^{d2^{-s}+1}}=0.
\end{array}\right.
\end{equation}
Set $u=(d2^{-s}+1,2^m+1)$, $1+\frac{1}{zx^{d2^{-s}+1}}=0$ has exactly $u$ solutions in $D_0$.

If $u<2^m+1$, since $u$ is a divisor of $2^m+1$, $u \le \frac{2^m+1}{3}$. It is easy to verify that $zx+x^{d2^{-s}}=0$ and $1+\frac{1}{zx^{d2^{-s}+1}}=0$ share one common solution if and only if $z=1$. Hence, we have
$$
n_{\infty}(1)=1, \quad n_0(1)=u-1, \quad n_1(1)=2^m-u+1,
$$
which leads to
\begin{align*}
\ S_d(1) &= \frac{1}{2^m+1}(2^{4m}-2^{3m}(u-1)+2^{2m}(2^m-u+1))\\
    &= 2^{3m}+\frac{2^{3m}}{2^m+1}-2^{2m}u\\
    &\ge 2^{3m}+\frac{2^{3m}}{2^m+1}-2^{2m}\cdot\frac{2^m+1}{3}\\
    &\ge 2^{2m+1}.
\end{align*}
Similarly, if $z \ne 1$, we have
$$
\nin=1, \quad \nz=u, \quad \no=2^m-u,
$$
which implies
\begin{align*}
\Sd &= \frac{1}{2^m+1}(2^{4m}-2^{3m}u+2^{2m}(2^m-u))\\
    &= 2^{3m}-2^{2m}u\\
    &>0.
\end{align*}
Hence, when $u<2^m+1$, $\Sd$ takes at least two distinct positive values and one of which is greater than or equal to $2^{2m+1}$.

If $u=2^m+1$, a similar treatment yields
$$
n_{\infty}(1)=1, \quad n_0(1)=2^m, \quad n_1(1)=0,
$$
which implies $S_d(1)=0$. Meanwhile, for $z \ne 1$, we have
$$
\nin=1, \quad \nz=0, \quad \no=2^m,
$$
which implies $\Sd=2^{3m} \ge 2^{2m+1}$. It is easy to see that $\Sd=2^{3m}$ if and only if $z \in D_0\setminus\{1\}$. Assume $\Sd$ takes three values. Then $2^{3m}$ must be the only positive value that $\Sd$ attains. Suppose $\Sd \in \{2^{3m},v ,0\}$ with $v<0$. Consequently,
\begin{align*}
\sum_{z \in \F_{2^{4m}}} \Sd &= 2^{3m}\cdot2^m+vN_v < 2^{4m},
\end{align*}
which contradicts the first equation of Lemma~\ref{3}.
\end{proof}

\begin{remark}
When $l=2$ and $s=1$, $d=(2^{2m}+1)(2^m-1)+2$ is of Niho type. This decimation has been studied in \cite{YN70} where $\Cd$ takes exactly four values.
\end{remark}

\section{Conclusion}

This paper demonstrates some new results on the cross correlation between an $m$-sequence and its decimated sequence. We make two contributions to this problem. The first one is the determination of the cross correlation distribution for the ternary $m$-sequence with period $3^{3r}-1$ and decimation $d=3^r+2$ or $d=3^{2r}+2$, where $(r,3)=1$. In the case with $(r,3)=3$, it is conjectured that the cross correlation distribution is the same as $(r,3)=1$. The second one is an initial step towards the cross correlation of binary $m$-sequences with period $2^{2lm}-1$ and decimation $d=\frac{2^{2lm}-1}{2^{m}+1}+2^{s}$, where $l \ge 2$ is even and $0 \le s \le 2m-1$. We prove the cross correlation takes at least four values. Additionally, we verify that two famous conjectures due to Sarwate et al. and Helleseth are true in this case. For the cross correlation distribution, numerical experiments show that the cross correlation may take eight or more values. Hence, determining the cross correlation distribution seems to be a very challenging problem.


\begin{thebibliography}{10}
\bibitem{BM10}
N.~Boston and G.~McGuire, ``The weight distributions of cyclic codes with two
  zeros and zeta functions,'' \emph{J. Symbolic Comput.}, vol.~45, no.~7, pp.
  723--733, 2010.

\bibitem{CCD00}
A.~Canteaut, P.~Charpin, and H.~Dobbertin, ``Binary {$m$}-sequences with
  three-valued crosscorrelation: a proof of {W}elch's conjecture,'' \emph{IEEE
  Trans. Inform. Theory}, vol.~46, no.~1, pp. 4--8, 2000.

\bibitem{C04}
P.~Charpin, ``Cyclic codes with few weights and {N}iho exponents,'' \emph{J.
  Combin. Theory Ser. A}, vol. 108, no.~2, pp. 247--259, 2004.

\bibitem{CKN12}
S.~T. Choi, J.~Y. Kim, and J.~S. No, ``On the cross-correlation of a $p$-ary
  $m$-sequence and its decimated sequences by
  $d=\frac{p^{n}+1}{p^{k}+1}+\frac{p^{n}-1}{2}$,'' arXiv:1205.5959.

\bibitem{CN12}
S.~T. Choi and J.~S. No, ``On the cross-correlation distributions between p-ary
  m-sequences and their decimated sequences,'' \emph{IEICE Trans.
  Fundamentals.}, vol. E95-A, no.~11, pp. 1808--1818, 2012.

\bibitem{CD96}
T.~W. Cusick and H.~Dobbertin, ``Some new three-valued crosscorrelation
  functions for binary {$m$}-sequences,'' \emph{IEEE Trans. Inform. Theory},
  vol.~42, no.~4, pp. 1238--1240, 1996.

\bibitem{D98}
H.~Dobbertin, ``One-to-one highly nonlinear power functions on {${\rm
  GF}(2^n)$},'' \emph{Appl. Algebra Engrg. Comm. Comput.}, vol.~9, no.~2, pp.
  139--152, 1998.

\bibitem{DFHR06}
H.~Dobbertin, P.~Felke, T.~Helleseth, and P.~Rosendahl, ``Niho type
  cross-correlation functions via {D}ickson polynomials and {K}loosterman
  sums,'' \emph{IEEE Trans. Inform. Theory}, vol.~52, no.~2, pp. 613--627,
  2006.

\bibitem{DHKM01}
H.~Dobbertin, T.~Helleseth, P.~V. Kumar, and H.~Martinsen, ``Ternary
  {$m$}-sequences with three-valued cross-correlation function: new decimations
  of {W}elch and {N}iho type,'' \emph{IEEE Trans. Inform. Theory}, vol.~47,
  no.~4, pp. 1473--1481, 2001.

\bibitem{FLX13}
T.~Feng, K.~Leung, and Q.~Xiang, ``Binary cyclic codes with two primitive
  nonzeros,'' \emph{Sci. China Math.}, vol.~56, no.~7, pp. 1403--1412, 2013.

\bibitem{G68}
R.~Gold, ``Maximal recursive sequences with 3-valued recursive
  cross-correlation functions (corresp.),'' \emph{IEEE Trans. Inform. Theory},
  vol.~14, no.~1, pp. 154--156, 1968.

\bibitem{GG05}
S.~W. Golomb and G.~Gong, \emph{Signal design for good correlation: For
  wireless communication, cryptography, and radar}.\hskip 1em plus 0.5em minus
  0.4em\relax Cambridge: Cambridge University Press, 2005.

\bibitem{TH76}
T.~Helleseth, ``Some results about the cross-correlation function between two
  maximal linear sequences,'' \emph{Discrete Math.}, vol.~16, no.~3, pp.
  209--232, 1976.

\bibitem{TH1978}
------, ``A note on the cross-correlation function between two binary maximal
  length linear sequences,'' \emph{Discrete Math.}, vol.~23, no.~3, pp.
  301--307, 1978.

\bibitem{TH03}
------, ``Pairs of {$m$}-sequences with a six-valued crosscorrelation,'' in
  \emph{Mathematical properties of sequences and other combinatorial structures
  ({L}os {A}ngeles, {CA}, 2002)}.\hskip 1em plus 0.5em minus 0.4em\relax
  Boston, MA: Kluwer Acad. Publ., 2003, pp. 1--6.

\bibitem{HHKZLJ09}
T.~Helleseth, L.~Hu, A.~Kholosha, X.~Zeng, N.~Li, and W.~Jiang,
  ``Period-different {$m$}-sequences with at most four-valued cross
  correlation,'' \emph{IEEE Trans. Inform. Theory}, vol.~55, no.~7, pp.
  3305--3311, 2009.

\bibitem{HK98}
T.~Helleseth and P.~V. Kumar, ``Sequences with low correlation,'' in
  \emph{Handbook of coding theory, {V}ol. {I}, {II}}.\hskip 1em plus 0.5em
  minus 0.4em\relax Amsterdam: North-Holland, 1998, pp. 1765--1853.

\bibitem{HR05}
T.~Helleseth and P.~Rosendahl, ``New pairs of {$m$}-sequences with 4-level
  cross-correlation,'' \emph{Finite Fields Appl.}, vol.~11, no.~4, pp.
  674--683, 2005.

\bibitem{HX01}
H.~D.~L. Hollmann and Q.~Xiang, ``A proof of the {W}elch and {N}iho conjectures
  on cross-correlations of binary {$m$}-sequences,'' \emph{Finite Fields
  Appl.}, vol.~7, no.~2, pp. 253--286, 2001.

\bibitem{JH09}
A.~Johansen and T.~Helleseth, ``A family of {$m$}-sequences with five-valued
  cross correlation,'' \emph{IEEE Trans. Inform. Theory}, vol.~55, no.~2, pp.
  880--887, 2009.

\bibitem{JHK09}
A.~Johansen, T.~Helleseth, and A.~Kholosha, ``Further results on
  {$m$}-sequences with five-valued cross correlation,'' \emph{IEEE Trans.
  Inform. Theory}, vol.~55, no.~12, pp. 5792--5802, 2009.

\bibitem{K71}
T.~Kasami, ``The weight enumerators for several classes of subcodes of the 2nd
  order binary reed-muller codes,'' \emph{Inf. Control}, vol.~18, no.~4, pp.
  369 -- 394, 1971.

\bibitem{K12}
D.~J. Katz, ``Weil sums of binomials, three-level cross-correlation, and a
  conjecture of {H}elleseth,'' \emph{J. Combin. Theory Ser. A}, vol. 119,
  no.~8, pp. 1644--1659, 2012.

\bibitem{LN83}
R.~Lidl and H.~Niederreiter, \emph{Finite fields}, ser. Encyclopedia of
  Mathematics and its Applications.\hskip 1em plus 0.5em minus 0.4em\relax
  Reading, MA: Addison-Wesley Publishing Company Advanced Book Program, 1983,
  vol.~20.

\bibitem{LF08}
J.~Luo and K.~Feng, ``Cyclic codes and sequences from generalized
  {C}oulter-{M}atthews function,'' \emph{IEEE Trans. Inform. Theory}, vol.~54,
  no.~12, pp. 5345--5353, 2008.

\bibitem{NH06}
G.~J. Ness and T.~Helleseth, ``Cross correlation of {$m$}-sequences of
  different lengths,'' \emph{IEEE Trans. Inform. Theory}, vol.~52, no.~4, pp.
  1637--1648, 2006.

\bibitem{NH062}
------, ``A new three-valued cross correlation between {$m$}-sequences of
  different lengths,'' \emph{IEEE Trans. Inform. Theory}, vol.~52, no.~10, pp.
  4695--4701, 2006.

\bibitem{NH07}
------, ``A new family of four-valued cross correlation between {$m$}-sequences
  of different lengths,'' \emph{IEEE Trans. Inform. Theory}, vol.~53, no.~11,
  pp. 4308--4313, 2007.

\bibitem{YN70}
Y.~Niho, ``Multivalued cross-correlation functions between two maximal linear
  recursive sequence,'' Ph.D. dissertation, Univ. Southern Calif., Los Angeles,
  1970.

\bibitem{SP80}
D.~Sarwate and M.~Pursley, ``Crosscorrelation properties of pseudorandom and
  related sequences,'' \emph{Proceedings of the IEEE}, vol.~68, no.~5, pp.
  593--619, 1980.

\bibitem{SKNS08}
E.~Y. Seo, Y.~S. Kim, J.~S. No, and D.~J. Shin, ``Cross-correlation
  distribution of {$p$}-ary {$m$}-sequence of period {$p^{4k}-1$} and its
  decimated sequences by {$(\frac{p^{2k}+1}2)^2$},'' \emph{IEEE Trans. Inform.
  Theory}, vol.~54, no.~7, pp. 3140--3149, 2008.

\bibitem{XZH10}
Y.~Xia, X.~Zeng, and L.~Hu, ``Further crosscorrelation properties of sequences
  with the decimation factor {$d=\frac{p^n+1}{p+1}-\frac{p^n-1}{2}$},''
  \emph{Appl. Algebra Engrg. Comm. Comput.}, vol.~21, no.~5, pp. 329--342,
  2010.
\end{thebibliography}
\end{document}